\title[ Logarithmic bounds of   moments for long  range operators]{	Power law logarithmic bounds of   moments for long  range operators  in arbitrary dimension}
\author{Wencai Liu}
\address[W. Liu]{ Department of Mathematics, Texas A\&M University, College Station, TX 77843-3368, USA} \email{liuwencai1226@gmail.com; wencail@tamu.edu}
\keywords{Green's function, sublinear bounds,  long range operators, large deviation theorem, discrepancy, quantum  dynamics.}
\subjclass[2020]{  81Q10 (primary); 35Q41, 35J10 (secondary)}
\theoremstyle{plain}
\newtheorem{theorem}{Theorem}[section]
\newtheorem{corollary}[theorem]{Corollary}
\newtheorem{lemma}[theorem]{Lemma}
\newcommand{\C}{\mathbb{C}}
\newcommand{\R}{\mathbb{R}}
\newcommand{\T}{\mathbb{T}}
\newcommand{\Z}{\mathbb{Z}}
\newcommand{\N}{\mathbb{N}}
\theoremstyle{definition}
\newtheorem{definition}[theorem]{Definition}
\newtheorem{remark}{Remark}
\begin{document}
	

	\maketitle
\begin{center}
 {\it Dedicated  to  Abel Klein  on the occasion of his 75th birthday}
\end{center}
	\begin{abstract}
	We show that the sublinear bound of the  bad Green's functions implies explicit logarithmic  bounds  of  moments for   long range  operators  in arbitrary dimension.
	\end{abstract}

	\section{Introduction}

 	In this paper, we are interested in the quantum dynamics of   long range operators on the lattice $\Z^d$.
	 For a self-adjoint operator
	$H$ on $\ell^2(\Z^d), $ $\phi\in \ell^2(\Z^d)$ and $p>0$, let $ \langle|X_H|_{\phi}^p\rangle(t)$ be  the  $p$th moment of  the position operator
 \begin{equation}\label{mom1}
 \langle|X_H|_{\phi}^p\rangle(t)=\sum_{n\in\Z^d}|n|^p|(e^{- itH}\phi,\delta_n)|^2 ,
 \end{equation}
	and $	\langle|\tilde{X}_H|_{\phi}^p\rangle(T)$ be   the time-averaged $p$th moment of  the position operator
	\begin{equation}\label{mom}
	\langle|\tilde{X}_H|_{\phi}^p\rangle(T)=\frac{2}{T}\int_0^\infty e^{-2t/T}\sum_{n\in\Z^d}|n|^p|(e^{- itH}\phi,\delta_n)|^2 dt.
	\end{equation}

	The   moments   $\langle|\tilde{X}_H|_{\phi}^p\rangle(T)$ and $\langle|X_H|_{\phi}^p\rangle(t)$ characterize  how
	fast does $e^{- itH}\phi$ spread out, which are  closely related to the spectral measure $\mu_{\phi}$.
	For example, dynamical localization, namely for any $\phi$, $\langle|X_H|_{\phi}^p\rangle(t)$ is uniformly bounded, implies  only  pure point spectrum of $H$ ~\cite{CFKS87} and continuity (with respect to  the Hausdorff measure) of the spectral measure $\mu_{\phi}$  leads to a power law lower bound of  $\langle|\tilde{X}|_{\phi}^p\rangle(T)$ (see ~\cite{l96} and references therein).

	For 	Anderson model with large disorder or the spectral edges (for  one dimensional case, it   holds for the full spectrum), Anderson/dynamical localization holds \cite{jxz19,df,ckm87,am,a94,gd,ds01,fmss,fs,dk89,gk01,gmp,ks80,ek19,ek20,gk13,gk03,bk05,ds20,lz22,kt18,ek16,kn15,kn13}.  For other random models, localization/delocalization  has been extensively studied  as well \cite{eks18,eks182,ek14,ks12,ghk,gks,gks09,aenss}. 
	It has been conjectured that Anderson model in any dimension $d\geq 3$ has the localization (pure point spectrum)-the extended state (absolutely continuous spectrum) transition, which   is still quite open. The transitions of moments  are usually easier to study ~\cite{knr,ks11,gk04,r12,js07,jss}.  
For the random operators,  the work of Abel Klein has been crucial in establishing the current state of the art.

	Unlike random operators,  spectral types (singular continuous spectrum or pure point spectrum) of one dimensional quasi-periodic Schr\"odinger operators in the positive Lyapunov exponent regime depend on the arithmetics  of frequencies and phases~\cite{js94,as82,gordon1976,jl118,jl222,ayz,sar}.   The  estimates on  moments are often more stable, namely less sensitive to arithmetics  of frequencies and phases.  Study in this line expects  to obtain asymptotics of moments for all phases under suitable conditions on frequencies ~\cite{HJ19,JP22,DK07,DKJFA,SS21,JM17} (restrictions on frequencies are necessary due to the work of Jitomirskaya and Zhang ~\cite{jz22}). 
		In this paper,  we will study upper bounds of   moments of long range operators, particularly when $H$  is a quasi-periodic  operator (defined on the lattice $\Z^d$ and driven by  base dynamics on the torus $\T^b$).
	
	For one-dimensional ($d=1$) quasi-periodic Schr\"odingers with the shift dynamics on the torus $\T$,  the celebrated work of Damanik-Tcheremchantsev  implies that  when the Lyapunov exponents are positive (in their setting, the potential is a trigonometric function), the moments     have sub-polynomial  growth in time~\cite{DK07,DKJFA}. 
 Jitomirskaya and Mavi  ~\cite{JM17} improved  their results to rough potentials. 
	Han and Jitomirskaya ~\cite{HJ19} generalized earlier works to the torus in arbitrary  dimension with more general base dynamics.
		In ~\cite{JP22}, Jitomirskaya and Powell combined techniques of Damanik-Tcheremchantsev~\cite{DK07} with estimates developed from the proof of Anderson localization for quasi-periodic Schr\"odinger operators  and obtained a  logarithmic bound of moments  (related to  an earlier work of Landrigan-Powell about logarithmic dimensions of spectral measures). 
	In all their work, the transfer matrices and an idea from Jitomirskaya-Last    developed to study spectral dimensions~\cite{JL20} play crucial roles. 
	
	Recently, Jitomirskaya and the author~\cite{JL21}  introduced a new approach-Green's function estimates and show the sub-polynomial growth of moments (by modification, their arguments could lead to logarithmic bounds),  which works for long range operators as well.  

	All the aforementioned   work   focuses on the lattice $\Z$. Very recently, Shamis and Sodin~\cite{SS21} developed an   approach to establish power law logarithmic bounds of moments based on large deviation estimates of Green's functions in arbitrary dimension.  Using the  author's earlier results  on large deviation theorems of Green's functions ~\cite{liuapde}, they obtained power law logarithmic bounds  for various  long range operators on the lattice $\Z^d$.

	In the present work, we introduce a different  approach to  study the 	power law logarithmic bounds of moments. 
	Our idea is inspired by  the localization proof of quasi-periodic Schr\"odinger operators developed by Bourgain and his collaborators \cite{bg,BGS,bbook,bk19,bou07}  (see a recent survey~\cite{Sc22} for more details) and a generalization by the author~\cite{liuapde}.  Research  in this direction starts with the work of  Bourgain and Goldstein ~\cite{bg}, where the authors studied the one-dimensional quasi-periodic Schr\"odinger operators.  With the development in the subsequent work by Bourgain, Goldstein and Schlag~\cite{bg,BGS,bbook,bou07,bgs01,GS01}, it becomes a robust approach to study many spectral problems of quasi-periodic 
	operators. Recently, Jitomirskaya, Shi and the author~\cite{jls20} extended  (also streamlined Bourgain's proof) Bourgain’s results in \cite{bou07} to arbitrary dimension of   frequencies, and the author proved a quantitative and non-self-adjoint version of the work in  \cite{BGS} in arbitrary lattice dimension~\cite{liuapde}. Thanks to all previous works such as Cartan's estimates and techniques from semi-algebraic sets, the localization proof boils down to establish a  sublinear bound (a discrepancy problem) of the bad Green's functions.  Our main result   shows that the sublinear bound immediately implies   explicit power law logarithmic bounds of moments.

	Finally,
we want to compare  logarithmic bounds obtained in this paper, and by Jitomirskaya-Powell~\cite{JP22} and  Shamis-Sodin~\cite{SS21}. Assume that the sublinear bound is $N^{1-\delta}$ (see \eqref{gsublinear} below for the precise definition).
Roughly speaking (see Corollary \ref{cor1}), our main theorem says  that the rate of power law logarithmic bounds  is $\frac{1}{\delta}$.
 The rate of power law logarithmic bounds  in \cite{SS21} by Shamis-Sodin comes from  large deviation estimates of  Green's functions, which is usually bigger than ours because of the     extra dimension loss (see Remark 10 in \cite{liuapde} for the details of dimension loss). See  item 2 in  Remark \ref{re3} and item 2 in Remark \ref{re5}.
We should point out that results in \cite{JP22} only work for Schr\"odinger operators (not long range operators) on the lattice  $\Z$. 
As mentioned in  \cite[Remark 6]{JP22}, they   obtain the (implicit) rate $\frac{C(b)}{\delta}$, where $C(b)$ is a constant depending on the dimension of the torus. Our rate shows that $C(b)$ can be  1.

%

	\section{Main results}

	Let $H$ be a  long range operator  acting on $u=\{u_n\}_{n\in\Z^d}$ in the following form:
	\begin{equation}\label{g01}
	(Hu)_n = \left( \sum_{n'\in \Z^d}H(n,n')u_{n'}\right).
	\end{equation}
	Assume that $H$
	satisfies
	\begin{enumerate}
		\item[a.]
		for any $n,n'\in\Z^d$,
			\begin{equation}\label{GO}
		|H(n,n')|\leq   C_1e^{-c_1|n-n'|},  C_1>0, c_1>0,
		\end{equation}
		where $|n|:=\max\limits_{1\leq i\leq d}|n_i|$ for $n=(n_1,n_2,\cdots,n_d)\in \Z^d$;
			\item[b.]  
		for any  $n,n'\in\Z^d$, 
		\begin{equation}\label{GO1}
		H(n,n')=\overline{H(n',n)}.
		\end{equation}
	\end{enumerate}



	For $d=1$,   the elementary region of size $N$ centered at 0 is given by
	\begin{equation*}
	Q_N=[-N,N].
	\end{equation*}
	
	For $d\geq 2$, denote by $Q_N$ an elementary region of size $N$ centered at 0, which is one of the following regions,
	\begin{equation*}
	Q_N=[-N,N]^d
	\end{equation*}
	or
	$$Q_N=[-N,N]^d\setminus\{n\in\mathbb{Z}^d: \ n_i\varsigma_i 0, 1\leq i\leq d\},$$
	where  for $ i=1,2,\cdots,d$, $ \varsigma_i\in \{<,>,\emptyset\}$ and at least two $ \varsigma_i$  are not $\emptyset$.

	Denote by $\mathcal{E}_N^{0}$ the set of all elementary regions of size $N$ centered at 0. 
	Let
	$$\mathcal{E}_N:=\{n+Q_N:n\in\mathbb{Z}^d,Q_N\in \mathcal{E}_N^{0}\}.$$
	We call  elements in $\mathcal{E}_N$ elementary regions.

	Let  $R_{{\Lambda}}$  be the operator of  restriction to $\Lambda \subset \Z^d$. Define the Green's function by
	\begin{equation}\label{g0}
	G_{{\Lambda}}(z)=(R_{{\Lambda}}(H-zI)R_{{\Lambda}})^{-1}.
	\end{equation}
	Set $ G(z)=(H-zI)^{-1}$.  Clearly, both $  G_{{\Lambda}}(z)$ and $G(z)$ are always well defined for $z\in \C_+\equiv \{z\in \C: \Im z>0\}$. Sometimes, we drop the dependence on $z$  for simplicity.
	
	We say an elementary region $ \Lambda\in \mathcal{E}_{N}$ is  in class {\it G} (Good)  if
\begin{equation}\label{ggood}
| G_{\Lambda}(n,n^\prime)|\leq  e^{-c_2|n-n^\prime|}, \text{ for } |n-n^\prime|\geq \frac{N}{10},
\end{equation}
where $0<c_2\leq  c_1$.

	Since the self-adjoint operator $H$ given by \eqref{g01} is bounded, there  exists a large 
	$K>0$  such that $\sigma({H})\subset [-K+1,K-1]$. 

For $\Lambda\subset \Z^d$, 
	denote by $\partial \Lambda$ its boundary.
		\begin{definition}
  Fix $ \varsigma\in(0,1)$.
		Let $ {\Lambda}_0\in \mathcal{E}_N$ be an elementary region.   Given $\xi$ with $ 0<\xi<1$,  we say ${\Lambda}_0$ satisfies sublinear bound property with the parameter $\xi $ if  
	for any family $\mathcal{F}$ of pairwise disjoint elementary regions in ${\Lambda}_0$ with size $M=\lfloor N^\xi \rfloor $,
		\begin{equation}\label{qdgnbad}
		\# \{ \Lambda \in \mathcal{F}: \Lambda \text { is not in class G } \}\leq \frac{N^{\varsigma}}{N^\xi}.
		\end{equation}
	\end{definition}

	\begin{theorem}\label{thm1}
		 
	Suppose there exist    $\epsilon_0>0$ and $N_0>0$ such that the following is true.
	Let  $z=E+i\epsilon$ with $|E|\leq K$ and $0<\epsilon\leq \epsilon_0$.
For any $n\in\Z^d$ with $|n|\geq N_0$ there exists  ${\Lambda}_0\in \mathcal{E}_{ N}$ such that $\frac{|n|}{100}\leq N \leq \frac{|n|}{10}$, $n\in {\Lambda}_0$, ${\rm dist } (n,\partial {\Lambda}_0 )\geq \frac{N}{5}$ and 
 ${\Lambda}_0$  satisfies sublinear bound property with the parameter $\xi $.
	Then  for any $\phi$ with compact support and any $\varepsilon>0$ there exists $T_0>0$ (depending on $d,p,\phi,K$, $\varsigma,\xi, \epsilon_0$, $c_1$, $c_2$, $C_1$, $N_0$ and $\varepsilon$) such that for any $T\geq T_0$
		\begin{equation} \label{gdq1}
	\langle|\tilde{X}_H|_{\phi}^p\rangle(T)	\leq  (\ln T)^{  \frac{p}{\xi} +\varepsilon}
	\end{equation}
	and  for any $t\geq T_0$
	\begin{equation} \label{gdq2}
	\langle|{X}_H|_{\phi}^p\rangle(t)	\leq  (\ln t)^{  \frac{p}{\xi} +\varepsilon}.
\end{equation}

\end{theorem}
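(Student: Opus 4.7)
My plan is to reduce the moment bounds to estimates of the resolvent $G(E+i/T)$ via a Laplace--Plancherel identity, derive pointwise decay of this resolvent from the sublinear bound hypothesis using a Bourgain--Goldstein--Schlag style iteration as in \cite{liuapde}, and integrate this decay against the $|n|^p$ weight. A direct Laplace--Plancherel computation gives
\begin{equation*}
\langle|\tilde X_H|_\phi^p\rangle(T) = \frac{1}{\pi T}\int_{-\infty}^{\infty}\sum_{n\in\Z^d}|n|^p\,\bigl|\langle G(E+i/T)\phi,\delta_n\rangle\bigr|^2\,dE,
\end{equation*}
so the task becomes bounding this integral. The contribution from $|E|\ge K$ is handled by a Combes--Thomas estimate adapted to \eqref{GO} and is negligible, while the Parseval identity $\frac{1}{T}\int\|G(E+i/T)\phi\|^2\,dE=\pi\|\phi\|^2$ supplies the default estimate I will use on small $|n|$.

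The technical heart is the Green's function decay for $|E|\le K$ and $\eta=1/T\le\epsilon_0$. For $|n|\ge N_0$ I take the elementary region $\Lambda_0\in\mathcal{E}_N$ from the hypothesis ($N\in[|n|/100,|n|/10]$, $n$ at distance $\ge N/5$ from $\partial\Lambda_0$, sublinear bound with parameter $\xi$). Partitioning $\Lambda_0$ into subregions of size $M=\lfloor N^\xi\rfloor$, the sublinear bound guarantees that any pairwise disjoint family contains at most $N^{\varsigma-\xi}$ subregions outside class G. Iterating the resolvent identity along these subregions---with a decay factor $e^{-c_2|u-v|}$ from each good subregion and a trivial factor $\|G_\Lambda\|\le 1/\eta=T$ from each bad one---I expect to establish
\begin{equation*}
|G_{\Lambda_0}(u,v)|\le\exp(-c_3|n|^\xi)\quad\text{for }|u-v|\ge N/10,
\end{equation*}
where the sub-exponential rate $|n|^\xi$ encodes the loss inherent in a single-scale sublinear bound. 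Since $\phi$ has compact support and $|n|$ is large, $\Lambda_0$ avoids $\mathrm{supp}\,\phi$, so the outer resolvent identity
\begin{equation*}
G(E+i/T)\phi(n)=-\sum_{u\in\Lambda_0,\,w\notin\Lambda_0}\langle G(E+i/T)\phi,\delta_w\rangle\,H(w,u)\,G_{\Lambda_0}(u,n),
\end{equation*}
together with the trivial bound $|\langle G(E+i/T)\phi,\delta_w\rangle|\le T\|\phi\|$ and the exponential decay of $H$ from \eqref{GO}, yields
\begin{equation*}
|\langle G(E+i/T)\phi,\delta_n\rangle|\le CT\exp(-c_4|n|^\xi).
\end{equation*}

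To finish \eqref{gdq1} I split $\sum_n=\sum_{|n|\le R}+\sum_{|n|>R}$ with $R=A(\ln T)^{1/\xi}$ and $A$ chosen so that $c_4A^\xi\ge 2$. The small-$|n|$ part is bounded via the Parseval identity by $R^p\|\phi\|^2\le(\ln T)^{p/\xi+\varepsilon/2}$, while the large-$|n|$ part is bounded by $CT\sum_{|n|>R}|n|^p\exp(-2c_4|n|^\xi)$, which by the choice of $A$ is $O(T^{-1})$. For the pointwise-in-time bound \eqref{gdq2}, I use the contour-shift representation $(e^{-itH}\phi)(n)=\frac{1}{2\pi i}\oint e^{-itz}G(z)\phi(n)\,dz$ on a contour with imaginary part $\pm 1/t$, so that the Green's function decay above, applied with $\eta=1/t$, produces the analogous $(\ln t)^{p/\xi+\varepsilon}$ bound.

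The hard part will be the iterated Green's function estimate in the second paragraph: tracking how the $N^{\varsigma-\xi}$ bad subregions---each potentially contributing a factor of $T$---interact with the geometric decay $e^{-c_2M}$ from the good subregions, and ensuring that the net bound is of the stated sub-exponential form $\exp(-c_3|n|^\xi)$. This is precisely the content of the sublinear-bound-to-Green's-function machinery from \cite{bg,BGS,liuapde} specialized to the single-scale hypothesis of Theorem \ref{thm1}, and it is this mechanism that converts the parameter $\xi$ of the sublinear bound into the exponent $p/\xi$ of the final logarithmic moment bound.
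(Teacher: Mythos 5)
Your overall architecture (Parseval/Laplace--Plancherel, Combes--Thomas off the spectrum, resolvent decay from the sublinear bound, splitting the $n$-sum at a power of $\ln T$) is the same as the paper's, but the key technical step is misstated in a way that matters. You claim $|G_{\Lambda_0}(u,v)|\le \exp(-c_3|n|^\xi)$ for all $|n|\ge N_0$, uniformly in $\eta=1/T\le\epsilon_0$. This cannot hold: in the block-resolvent iteration each bad subregion contributes a factor bounded only by $\|G_\Lambda\|\le 1/\eta=T$, and there can be up to $N^{\varsigma-\xi}$ of them, so for fixed $|n|$ the iteration loses a factor as large as $T^{N^{\varsigma-\xi}}$, which swamps any decay as $T\to\infty$. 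The sublinear bound alone gives nothing for fixed $N$; decay becomes available only once $N$ is large \emph{relative to} $\log T$, specifically once $\epsilon^{-1}\le e^{L^{\sigma}}$ for all intermediate regions of width $\ge N^{\xi}$, i.e.\ $N^{\xi\sigma}\ge\log(1/\epsilon)$. This coupling between $|n|$ and $T$ is precisely the content of Lemma \ref{key1}, and it --- not the rate of decay --- is what produces the exponent: the cutoff is forced to be $(\log T)^{1/(\xi\sigma)}$, and $\sigma\to 1$ gives $p/\xi+\varepsilon$. Above that threshold the correct conclusion (via Theorem \ref{thmmul}) is genuine exponential decay $e^{-c|n|}$, not the sub-exponential $e^{-c|n|^{\xi}}$ you posit; your attribution of the final exponent to a sub-exponential decay rate misdiagnoses the mechanism.

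Your numerology is not far off --- your cutoff $R=A(\ln T)^{1/\xi}$ differs from the paper's only by the $\sigma$-loss, and in the regime $|n|>R$ a decay estimate does hold --- but to justify it at your cutoff you would need a version of Theorem \ref{thmmul} with the a priori bound $\|G_\Lambda\|\le e^{\tau L}$ ($\tau$ small, linear in $L$) rather than $e^{L^{\sigma}}$, which is not what the cited machinery provides and which you do not address. You yourself flag the bad-box bookkeeping as "the hard part," but the resolution is not to push the $\exp(-c_3|n|^\xi)$ bound through; it is to restrict to $N\ge(\log(1/\epsilon))^{1/(\xi\sigma)}$ so that $\epsilon^{-1}\le e^{L^{\sigma}}$, invoke Theorem \ref{thmmul} verbatim to get exponential decay, and pay for the restriction with the $(\log T)^{p/(\xi\sigma)}$ bound on the complementary (small $|n|$) sum via $\sum_n a(j,n,T)=1$. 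As written, the central lemma of your argument is false and the step that generates the theorem's exponent is missing.
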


Fixed $0<\sigma<1$,
	we say  an elementary region $ \Lambda\in \mathcal{E}_{N}$ is  in class {\it SG$_{N}$} (strongly good with size $N$) if
\begin{equation}\label{ggoodt1}
||G_{\Lambda}||\leq  e^{N^{\sigma}},
\end{equation}
and
\begin{equation}\label{ggoodt2}
|G_{\Lambda}(n,n^\prime)|\leq  e^{-c_2|n-n^\prime|}, \text{ for } |n-n^\prime|\geq \frac{N}{10},
\end{equation}
where $0<c_2\leq c_1$. 
\begin{corollary}\label{cor1}
Define $\mathcal{B}_{N,N_1}$ as
	\begin{equation*}
	\mathcal{B}_{N,N_1}=\{ n\in [-N,N]^d: \text{ there exists } Q_{N_1}\in  \mathcal{E}_{N_1}^{0} \text{ such that } n+Q_{N_1}\notin \text{{\it SG}}_{N_1}\}
	\end{equation*}
	Assume that there exists $\epsilon_0>0$ such that for any $z=E+i\epsilon $ with  $|E|\leq K$ and $0<\epsilon\leq \epsilon_0$, and arbitrarily  small $\varepsilon>0$,
	\begin{equation}\label{gsublinear}
 \#  \mathcal{B}_{N,\lfloor N^{\varepsilon} \rfloor}\leq N^{1-\delta} {\text{ when }} N \geq N_0 
	\end{equation}
	($N_0$ may depend on $\varepsilon$).
	Then  for any $\phi$ with compact support and any $\varepsilon>0$ there exists $T_0>0$ (depending on $d,p,\phi,K$, $\sigma,\delta, \epsilon_0$, $c_1$, $c_2$, $C_1$, $N_0$ and $\varepsilon$) such that for any $T\geq T_0$
	\begin{equation} \label{gdq11}
	\langle|\tilde{X}_H|_{\phi}^p\rangle(T)	\leq  (\ln T)^{  \frac{p}{\delta} +\varepsilon}
	\end{equation}
	and  for any $t\geq T_0$
	\begin{equation} \label{gdq12}
	\langle|{X}_H|_{\phi}^p\rangle(t)	\leq  (\ln t)^{  \frac{p}{\delta} +\varepsilon}.
	\end{equation}

\end{corollary}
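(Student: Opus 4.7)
The plan is to deduce Corollary \ref{cor1} as a direct consequence of Theorem \ref{thm1}. The entire content is to translate the count bound \eqref{gsublinear} into the sublinear bound property required by Theorem \ref{thm1}, at a scale parameter $\xi$ that can be chosen arbitrarily close to $\delta$ from below.

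First I would fix parameters. Given $\varepsilon>0$, pick $\eta>0$ small enough that $\tfrac{p}{\delta-2\eta}+\eta<\tfrac{p}{\delta}+\varepsilon$, set $\xi=\delta-2\eta$ and choose $\varsigma\in(1-2\eta,\,1)$; the latter interval is nonempty precisely because $\xi<\delta$ forces $1-\delta+\xi=1-2\eta<1$. For each $n\in\Z^d$ with $|n|$ large, take $N=\lfloor |n|/50\rfloor$ and $\Lambda_0=n+[-N,N]^d\in\mathcal{E}_N$, which manifestly satisfies the geometric requirements $|n|/100\le N\le |n|/10$, $n\in\Lambda_0$, and $\mathrm{dist}(n,\partial\Lambda_0)=N\ge N/5$.

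The main step is to verify the sublinear bound for this $\Lambda_0$ at scale $M=\lfloor N^\xi\rfloor$. Let $\mathcal{F}=\{n_\alpha+Q^\alpha\}$ be any pairwise disjoint family of elementary regions of size $M$ inside $\Lambda_0$, with $Q^\alpha\in\mathcal{E}_M^0$. Comparing \eqref{ggood} with \eqref{ggoodt1}--\eqref{ggoodt2} shows $\text{SG}_M\subset \text{G}$, so whenever $n_\alpha+Q^\alpha$ fails to be in class G it also fails to be in $\text{SG}_M$, which by definition places $n_\alpha$ in $\mathcal{B}_{N^\ast,M}$ for any $N^\ast$ with $\Lambda_0\subset[-N^\ast,N^\ast]^d$; I take $N^\ast=\lceil 1.2\,|n|\rceil$. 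Since every $Q\in\mathcal{E}_M^0$ contains the origin, the disjointness of $\mathcal{F}$ forces the $n_\alpha$ to be distinct, so $\#\{\alpha:n_\alpha+Q^\alpha\notin\text{G}\}\le \#\mathcal{B}_{N^\ast,M}$. Writing $M=\lfloor (N^\ast)^{\varepsilon_h}\rfloor$ with $\varepsilon_h=\log M/\log N^\ast\to\xi$ as $|n|\to\infty$, the hypothesis \eqref{gsublinear} yields $\#\mathcal{B}_{N^\ast,M}\le (N^\ast)^{1-\delta}\le CN^{1-\delta}$, which is at most $N^{\varsigma-\xi}=N^\varsigma/N^\xi$ for $N$ past some threshold, thanks to the strict inequality $\varsigma-\xi>1-\delta$. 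This is exactly \eqref{qdgnbad}.

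Theorem \ref{thm1} therefore applies with the chosen $(\xi,\varsigma)$ and produces \eqref{gdq1}--\eqref{gdq2} with exponent $\tfrac{p}{\xi}+\eta$, which by our initial choice of $\eta$ is dominated by $\tfrac{p}{\delta}+\varepsilon$ for $T$ (resp.\ $t$) large. The main piece of bookkeeping I expect is the scale mismatch: the hypothesis \eqref{gsublinear} is indexed by the ambient box size $N^\ast$ with a small exponent, while the sublinear bound property of Theorem \ref{thm1} is indexed by the local box size $N$ with the exponent $\xi$. Since $\log N/\log N^\ast\to 1$, the two viewpoints agree up to an $N_0$-threshold, which is absorbed into the $T_0$ permitted by Theorem \ref{thm1}; no additional analytic input is required.
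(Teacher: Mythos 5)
Your reduction to Theorem \ref{thm1} is the right framework, and the geometric setup (choice of $\Lambda_0$, distinctness of the centers $n_\alpha$, the inclusion $SG_M\subset G$) is fine. The gap is in the main step: you invoke the hypothesis \eqref{gsublinear} at the scale $M=\lfloor N^{\xi}\rfloor$ with $\xi$ close to $\delta$, i.e.\ with inner exponent $\varepsilon_h\to\xi=\delta-2\eta$. But \eqref{gsublinear} is only assumed for \emph{arbitrarily small} $\varepsilon>0$; $\delta$ is a fixed parameter that may well be close to $1$, so $\varepsilon_h\approx\delta$ is not covered by the hypothesis. This is not a cosmetic point: in the applications the bound on $\#\mathcal{B}_{N,\lfloor N^{\varepsilon}\rfloor}$ comes from semi-algebraic set estimates that only work when the inner scale is a small power of the outer one, which is exactly why the corollary is formulated with small $\varepsilon$. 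As written, your argument proves the corollary only under the stronger, unassumed hypothesis that \eqref{gsublinear} holds with inner exponent all the way up to $\delta$.

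The missing ingredient is a two-scale upgrade, and it is precisely where the proof needs the ``additional analytic input'' you claim is unnecessary. One applies \eqref{gsublinear} at the genuinely small scale $N_1=\lfloor N^{\varepsilon}\rfloor$, observes that at most $N_1^{C(d)}N^{1-\delta}=N^{1-\delta+C(d)\varepsilon}$ members of a disjoint family of size-$\lfloor N^{\xi}\rfloor$ regions can intersect some region not in $SG_{N_1}$, and then uses a resolvent-identity/coarse-graining argument (Theorem 6.1 of \cite{liuapde}) to conclude that any size-$\lfloor N^{\xi}\rfloor$ elementary region meeting no non-$SG_{N_1}$ region is itself in class G. This is also why the corollary is phrased in terms of $SG_{N_1}$ rather than G: the norm bound \eqref{ggoodt1} at the small scale is needed for that resolvent expansion to close. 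Taking $\xi=\delta-2C(d)\varepsilon$ and $\varsigma=1-\varepsilon$ then yields \eqref{qdgnbad}, after which Theorem \ref{thm1} applies; your parameter bookkeeping at the end is otherwise correct.
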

\begin{remark}
In applications,  $\varsigma=1-\varepsilon$ with arbitrarily  small $\varepsilon>0$. Then the upper bound  in \eqref{qdgnbad}   equals   $N^{1-\xi-\varepsilon}$.  Both $N^{1-\xi-\varepsilon}$ and $N^{1-\delta}$ in \eqref{gsublinear} are referred to as the sublinear  bound property of (bad) Green's functions. 
\end{remark}


\section{Proof of Theorem \ref{thm1}  and Corollary \ref{cor1}}
	
 Let us first recall some notations from \cite{liuapde}.

The  width  of a   subset  $\Lambda\subset \Z^d$, is defined by maximum
$M\in \N$  such that  for any  $n\in \Lambda$, there exists  $\hat{M}\in \mathcal{E}_M$ such that
\begin{equation*}
n\in \hat{M} \subset \Lambda
\end{equation*}
and
\begin{equation*}
\text{ dist }(n,\Lambda\backslash \hat{M})\geq M/2.
\end{equation*}

A generalized  elementary region is defined to be a subset $\Lambda\subset \Z^d$ of the form
\begin{equation*}
\Lambda:= R\backslash(R+y),
\end{equation*}
where $y\in\Z^d$ is arbitrary and $R$ is a rectangle,
\begin{equation*}
R=\{n=(n_1,n_2,\cdots,n_d)\in \Z^d: |n_1-n_1^\prime|\leq M_1, \cdots,|n_d-n_d^\prime|\leq M_d\}.
\end{equation*}

For $ \Lambda\subset\mathbb{Z}^d$,   denote by 
$\mathrm{diam}(\Lambda)=\sup_{n,n'\in \Lambda}|n-n'|$ its diameter.

Denote by $\mathcal{R}_N$
all  generalized elementary regions with diameter less than or equal  to $N$.
Denote by $\mathcal{R}_N^M$
all   generalized elementary regions in $\mathcal{R}_N$ with width larger than or equal to $M$.

Let us collect and define some notations which will be used throughout the proof: 
 \begin{itemize}
 	\item $\sigma(H)\subset [-K+1,K-1]$. 
 	\item $z=E+ \epsilon i$, $|E|\leq K$ and $0<\epsilon\leq \epsilon_0$. 
 	\item $\epsilon=\frac{1}{T}$.
 	\item $G_{\Lambda}= G_{\Lambda}(z)=  (R_{\Lambda} (H-z) R_{\Lambda})^{-1}= (R_{\Lambda} (H-E-\epsilon i) R_{\Lambda})^{-1}$ 
 	\item  $G= (H-zI)^{-1}$.
 	\item $\text {supp } \phi \subset [-K_1,K_1]^d$
 	\item $C(c)$ is a large (small) constant.
 	
 \end{itemize}

	\begin{theorem}\label{thmmul}
		Let $\varsigma,\sigma,\xi\in(0,1)$.
		Let ${\Lambda}_0\in \mathcal{E}_N$ be an elementary region with the property that for all $ \Lambda\subset {\Lambda}_0$, $\Lambda\in \mathcal{R}_L^{N^\xi}$ with $N^{\xi}\leq L\leq 2N$, the Green's function $G_{\Lambda}$ satisfies 
		\begin{equation}\label{gboundnov26}
		||G_{\Lambda}||\leq e^{L^{\sigma}}.
		\end{equation}
		Assume that $c_2\leq \frac{4}{5}c_1$ and for any family $\mathcal{F}$ of pairwise disjoint elementary regions in ${\Lambda}_0$ with size $M=\lfloor N^\xi \rfloor $,
		\begin{equation}\label{gnbad}
		\# \{ \Lambda \in \mathcal{F}: \Lambda \text { is not in class G } \}\leq \frac{N^{\varsigma}}{N^\xi}.
		\end{equation}
		Then
		for large $N$ (depending on $C_1,c_1,\varsigma,\sigma,\xi$ and the lower bound of $c_2$),
		\begin{equation}\label{gthm1}
		|G_{\Lambda_0}(n,n^\prime)|\leq  e^{-(c_2-N^{-\vartheta})|n-n^\prime|},\text{ for } |n-n^\prime|\geq \frac{N}{10},
		\end{equation}
		where $\vartheta=\vartheta(\sigma,\xi,\varsigma)>0$.
	\end{theorem}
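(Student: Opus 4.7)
\textbf{Proof proposal for Theorem \ref{thmmul}.} The plan is to run a multi-scale resolvent iteration inside $\Lambda_0$, using the sublinear bound \eqref{gnbad} to guarantee that almost every point has a nearby good elementary region of size $M=\lfloor N^{\xi}\rfloor$, and using the a priori norm bound \eqref{gboundnov26} to absorb the cost of the exceptional (bad) points. Throughout I fix $|n-n'|\geq N/10$ and analyze $G_{\Lambda_0}(n,n')$.

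First I would unpack condition \eqref{gnbad} into a pointwise statement: set $\mathcal{B}\subset\Lambda_0$ to be the set of points $m\in\Lambda_0$ which are not contained in any good elementary region $\Lambda\in\mathcal{E}_M$ with $m\in\Lambda\subset\Lambda_0$ and $\mathrm{dist}(m,\partial\Lambda)\geq M/10$. A covering/packing argument (choose a maximal family of disjoint elementary regions of size $M$ around points of $\mathcal{B}$) together with \eqref{gnbad} yields $\#\mathcal{B}\leq C N^{\varsigma}$. In particular, along any straight line of length $\geq N^{\varsigma}M$ or so, bad points are sparse. For each $m\in\Lambda_0\setminus\mathcal{B}$, fix once and for all a good elementary region $\Lambda(m)\subset\Lambda_0$ of size $M$ with $\mathrm{dist}(m,\partial\Lambda(m))\geq M/10$.

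Next I would set up the iteration using the resolvent identity. Assume $n\in\Lambda_0\setminus\mathcal{B}$ (the case $n\in\mathcal{B}$ is handled at the very end by moving to a nearby non-bad point at a cost $e^{N^{\sigma}}$). Applying the identity on $\Lambda(n)$ gives
\begin{equation*}
G_{\Lambda_0}(n,n')=-\sum_{\substack{m\in\Lambda(n)\\ m_1\in\Lambda_0\setminus\Lambda(n)}} G_{\Lambda(n)}(n,m)\,H(m,m_1)\,G_{\Lambda_0}(m_1,n').
\end{equation*}
Because $\Lambda(n)$ is in class G and $\mathrm{dist}(n,\partial\Lambda(n))\geq M/10$, the bound \eqref{ggood} combined with \eqref{GO} yields (after summing geometric series in $|m-m_1|$) an estimate of the form $|G_{\Lambda_0}(n,n')|\leq e^{-c_2(1-O(M^{-1}))\,r_1}\max_{m_1}|G_{\Lambda_0}(m_1,n')|$, where $m_1$ now sits outside $\Lambda(n)$ and $r_1\gtrsim M/10$ is the distance traveled. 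Iterate: if $m_1\notin\mathcal{B}$, apply the same step with $\Lambda(m_1)$. Each good step transports a distance $\gtrsim M/10$ and produces a factor $e^{-c_2(1-O(M^{-1}))r_k}$; the geometric sum and the decay of $H$ account for polynomial prefactors in $M$, which will be absorbed into the $N^{-\vartheta}$ correction. If instead $m_k\in\mathcal{B}$, I pay the cost $\|G_{\Lambda_0}\|\leq e^{(2N)^{\sigma}}$ (from \eqref{gboundnov26} applied to $\Lambda=\Lambda_0$, which lies in $\mathcal{R}_{2N}^{N^{\xi}}$) and continue from a non-bad neighbor of $m_k$ at distance $O(1)$.

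Finally I would do the accounting. The iteration terminates once the walk reaches $n'$ (or once the remaining distance is $\leq M$, at which point a single application of \eqref{gboundnov26} on a small region finishes the estimate). The total distance traveled is $|n-n'|\geq N/10$, requiring $K\gtrsim |n-n'|/M$ steps. The number of bad encounters is at most $\#\mathcal{B}\leq C N^{\varsigma}$, so the total ``bad'' cost is at most $e^{C N^{\varsigma}N^{\sigma}}$. Choosing $\vartheta>0$ small enough that
\begin{equation*}
c_2\,N^{-\vartheta}\,|n-n'|\ \geq\ CN^{\varsigma+\sigma}+C\log N\cdot (\text{number of steps}),
\end{equation*}
which is possible provided $\varsigma+\sigma<1$ and $\vartheta<1-\varsigma-\sigma-\xi$ (with $\vartheta$ also smaller than $1-\xi$ to absorb the $O(M^{-1})$ slippage in the exponent at each step), we arrive at the claimed inequality $|G_{\Lambda_0}(n,n')|\leq e^{-(c_2-N^{-\vartheta})|n-n'|}$. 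The main obstacle is the bookkeeping for the bad-point detours: one must show that the paths produced by the iteration genuinely reach $n'$ without getting trapped in $\mathcal{B}$, and that the accumulated exponential losses of size $e^{N^{\sigma}}$ at each bad encounter do not erode the main decay rate; the hypothesis $\varsigma<1$ together with the freedom to take $\sigma$ and $\xi$ small is what makes this work, and the hypothesis $c_2\leq \tfrac{4}{5}c_1$ ensures that the decay of $H$ in the resolvent identity is genuinely faster than the decay being propagated, so that the sums over boundary pairs $(m,m_1)$ converge without destroying the rate.
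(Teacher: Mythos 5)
First, a point of reference: the paper does not prove Theorem \ref{thmmul} at all — the remark immediately following the statement attributes the proof to \cite{BGS} (for $\Delta+V$ on $\Z^2$) and to \cite{liuapde} for the present generality. Your overall strategy (iterate the resolvent identity, gain $e^{-c_2(1-o(1))M}$ from good $M$-regions, pay for bad sites with the a priori bound \eqref{gboundnov26}) is indeed the strategy of those papers, but your accounting of the bad sites has a genuine gap. You pay $\|G_{\Lambda_0}\|\leq e^{(2N)^{\sigma}}$ at \emph{each} bad encounter and bound the number of such encounters by $\#\mathcal{B}\lesssim N^{\varsigma}$ (which, incidentally, is only right for $d=1$: a maximal disjoint family of bad $M$-regions has at most $N^{\varsigma-\xi}$ members, each containing $\sim N^{d\xi}$ sites, so in general $\#\mathcal{B}\lesssim N^{\varsigma+(d-1)\xi}$). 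The resulting total bad cost is $e^{CN^{\varsigma+\sigma}}$, and to absorb it into $e^{N^{-\vartheta}|n-n'|}$ with $|n-n'|\asymp N$ you are forced to impose $\varsigma+\sigma<1$. That hypothesis is not in the theorem and fails in every application in this paper: the proof of Theorem \ref{thm1} lets $\sigma\to 1$, and the applications take $\varsigma=1-\varepsilon$. You do not have the freedom to "take $\sigma$ and $\xi$ small" — they are given parameters.

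The missing idea, which is the real content of the argument in \cite{BGS} and \cite{liuapde}, is a cluster decomposition of the bad set. The at most $N^{\varsigma-\xi}$ disjoint bad $M$-regions are merged into pairwise well-separated connected clusters $\Omega_1,\dots,\Omega_s$ with $\sum_j \mathrm{diam}(\Omega_j)\lesssim N^{\varsigma-\xi}\cdot M\lesssim N^{\varsigma}$; each $\Omega_j$ is enclosed in a \emph{generalized} elementary region $W_j\subset\Lambda_0$ of diameter $\ell_j\lesssim \mathrm{diam}(\Omega_j)$ and width at least $N^{\xi}$ (this is precisely why the hypothesis quantifies \eqref{gboundnov26} over all of $\mathcal{R}_L^{N^{\xi}}$ with $N^{\xi}\leq L\leq 2N$, not just over $\Lambda_0$), and the a priori bound is applied to $W_j$, costing only $e^{\ell_j^{\sigma}}$ rather than $e^{(2N)^{\sigma}}$. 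The loss per cluster — the norm factor plus the undecayed distance across it — is then $e^{O(\ell_j)}$, so the total loss is $e^{O(\sum_j\ell_j)}=e^{O(N^{\varsigma})}$, absorbable for any $\vartheta<1-\varsigma$ with no relation between $\sigma$ and $\varsigma$ required. This same decomposition is what resolves the "trapping" problem you flag but leave open: expanding on $W_j$ guarantees the exit points lie outside the entire cluster, so the walk cannot stall inside $\mathcal{B}$, and the combinatorics of revisits is controlled by the gain $e^{-c_2(1-o(1))N^{\xi}/4}$ of each good step against the merely polynomial number of terms generated per expansion step. Your reading of the role of $c_2\leq\frac{4}{5}c_1$ (excess decay of $H$ to sum over the boundary pairs) is correct.
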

\begin{remark}
	 Theorem \ref{thmmul}  in the settings   of  Schr\"odinger operators ($\Delta+V$) on $\Z^2$  was   proved in \cite{BGS}.
	 The author generalized their proof to the settings in  Theorem \ref{thmmul} in \cite{liuapde}.
\end{remark}
	Assume $\Lambda_1$ and $\Lambda_2$ are two disjoint subsets of  $\Z^d$. 
Let $\Lambda=\Lambda_1\cup \Lambda_2$.
Suppose that $ R_{\Lambda}AR_{\Lambda}$ and $ R_{\Lambda_i}AR_{\Lambda_i}$, $i=1,2$ are invertible.
Then
\begin{equation*}
G_{\Lambda}=G_{\Lambda_1}+G_{\Lambda_2}-(G_{\Lambda_1}+G_{\Lambda_2})( H_{\Lambda}-H_{\Lambda_1}-H_{\Lambda_2})G_{\Lambda}.
\end{equation*}
If $n\in \Lambda_2$ and $m\in \Lambda$, we have
\begin{equation}\label{Greson}
|G_{\Lambda}(m,n)|\leq |G_{\Lambda_2}(m,n)|\chi_{\Lambda_2}(n)+ \sum_{n^{\prime}\in \Lambda_1,n^{\prime\prime}\in \Lambda_2} e^{-c_1|n^{\prime}-n^{\prime\prime}|}|G_{\Lambda}(m,n^{\prime})||G_{\Lambda_2}(n^{\prime\prime},n)|.
\end{equation}

\begin{lemma}\label{key1}
	Fixed any $\sigma\in (0,1)$,
	let $N\geq  (\log \frac{1}{\epsilon})^{\frac{1}{\xi \sigma}}$.
	Assume that 
${\Lambda}_0\in \mathcal{E}_{ N}$,  $n\in {\Lambda}_0$, ${\rm dist } (n,\partial {\Lambda}_0 )\geq \frac{N}{5}$ and 
${\Lambda}_0$  satisfies the sublinear bound property with the parameter $\xi $.  Then for any $j$ with $|j|\leq K_1$,
\begin{equation}
|((H-E-\epsilon i)^{-1}\delta_j,\delta_n)| \leq C\epsilon^{-2} e^{-c|n|}
\end{equation}
\end{lemma}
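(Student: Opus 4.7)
\emph{Proof plan.} The plan is to apply Theorem \ref{thmmul} to $\Lambda_0$ to obtain exponential off-diagonal decay of $G_{\Lambda_0}$, and then propagate that decay to the full resolvent $G$ via the resolvent identity \eqref{Greson}. First I would verify the hypotheses of Theorem \ref{thmmul} for $\Lambda_0$: the sublinear bound \eqref{gnbad} is the assumed sublinear bound property on $\Lambda_0$, and for the operator-norm bound \eqref{gboundnov26}, note that $\mathrm{Im}\, z=\epsilon$ gives $\|G_\Lambda\|\leq\epsilon^{-1}$ for every $\Lambda$; the hypothesis $N\geq(\log\epsilon^{-1})^{1/(\xi\sigma)}$ ensures that for any $\Lambda\in\mathcal{R}_L^{N^\xi}$ with $L\geq N^\xi$ we have $L^\sigma\geq N^{\xi\sigma}\geq\log\epsilon^{-1}$, hence $\|G_\Lambda\|\leq\epsilon^{-1}\leq e^{L^\sigma}$. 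Theorem \ref{thmmul} therefore yields
\[
|G_{\Lambda_0}(m,m')|\leq e^{-(c_2-N^{-\vartheta})|m-m'|}\quad\text{for } |m-m'|\geq N/10,
\]
which I would treat as decay with some fixed positive rate $c$ once $N$ is large.

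Next I would apply the resolvent identity \eqref{Greson} with $\Lambda=\Z^d$, $\Lambda_2=\Lambda_0$, $\Lambda_1=\Z^d\setminus\Lambda_0$, taking $m=j$ and $n\in\Lambda_0$. The ``boundary-free'' term is nonzero only when $j\in\Lambda_0$, in which case $|j-n|\geq|n|-K_1\geq N/10$ (in the regime where $|n|$ is comparable to $N$, which is the setting of the application inside Theorem \ref{thm1}), and the decay above yields $|G_{\Lambda_0}(j,n)|\leq e^{-c|n|}$. For the double sum I would bound $|G(j,n')|\leq\epsilon^{-1}$ and split the $n''$-sum according to whether $|n''-n|\geq N/10$ or $|n''-n|< N/10$. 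In the first sub-case the exponential decay of $G_{\Lambda_0}(n'',n)$ combines with $e^{-c_1|n'-n''|}$ (using $c_2\leq c_1$) by the triangle inequality into $e^{-c|n-n'|}$; since $n'\notin\Lambda_0$ together with $\mathrm{dist}(n,\partial\Lambda_0)\geq N/5$ forces $|n-n'|\geq N/5$, the $(n',n'')$-sum contributes at most $C\epsilon^{-1}e^{-c|n|}$. In the second sub-case I use $|G_{\Lambda_0}(n'',n)|\leq\epsilon^{-1}$; now $\mathrm{dist}(n'',\partial\Lambda_0)\geq N/10$ forces $|n'-n''|\geq N/10$, producing a small factor $e^{-c_1 N/10}$, so the contribution is at most $C\epsilon^{-2}N^d e^{-c_1 N/10}$, which is $\leq C\epsilon^{-2}e^{-c|n|}$ because $\epsilon^{-2}\leq e^{2N^{\xi\sigma}}$ with $\xi\sigma<1$. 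Adding the two sub-cases yields the claim.

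The hardest part is the clean absorption of the $\epsilon^{-1}$ losses by the exponential decay: this is precisely the role of the hypothesis $N\geq(\log\epsilon^{-1})^{1/(\xi\sigma)}$, which converts each $\epsilon^{-1}$ factor into at most $e^{N^{\xi\sigma}}$ and is then defeated by the decay $e^{-cN}$ since $\xi\sigma<1$. Everything else is triangle-inequality bookkeeping of polynomial volume factors $N^d$ and a careful use of $\mathrm{dist}(n,\partial\Lambda_0)\geq N/5$ to turn ``$n'\notin\Lambda_0$'' into a quantitative separation between $n$ and $n'$.
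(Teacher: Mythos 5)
Your proposal matches the paper's proof: you verify the hypotheses of Theorem \ref{thmmul} via the trivial self-adjointness bound $\|G_{\Lambda}\|\leq \epsilon^{-1}$ together with $N\geq(\log\epsilon^{-1})^{1/(\xi\sigma)}$, then propagate the resulting decay of $G_{\Lambda_0}$ to $G$ through \eqref{Greson} with $\Lambda_2=\Lambda_0$, $\Lambda=\Z^d$, splitting the double sum by the same two mechanisms the paper uses (your dichotomy $|n''-n|\geq N/10$ versus $|n''-n|<N/10$ is equivalent, given $\mathrm{dist}(n,\partial\Lambda_0)\geq N/5$, to the paper's split on $\mathrm{dist}(n'',\partial\Lambda_0)$). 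The only cosmetic difference is that the paper simply notes $j\notin\Lambda_0$ (automatic in the intended regime $N\leq|n|/10$, $|j|\leq K_1$) so the boundary-free term vanishes, whereas you also treat the case $j\in\Lambda_0$.
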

\begin{proof}
	Since $H$ is self-adjoint, one  has that   for any $\Lambda \subset \Z^d$,  ${\rm dist} (\sigma(H_{\Lambda}), z)\geq \epsilon$ and hence 
	\begin{equation*}
||G_{\Lambda}||\leq \epsilon^{-1}. 
	\end{equation*}
Then for any $ \Lambda\subset {\Lambda}_0$, $\Lambda\in \mathcal{R}_L^{N^\xi}$ with $N^{\xi}\leq L\leq 2N$, one has that  the Green's function $G_{\Lambda}$ satisfies 
\begin{equation}\label{gdgboundnov26}
||G_{\Lambda}||\leq \epsilon^{-1}\leq e^{L^{\sigma}},
\end{equation}
where the second inequality in \eqref{gdgboundnov26} holds by the assumption.
By Theorem \ref{thmmul}, one has that 
	\begin{equation}\label{gdgthm1}
|G_{\Lambda_0}(n,n^\prime)|\leq  e^{-c|n-n^\prime|},\text{ for } |n-n^\prime|\geq \frac{N}{10}.
\end{equation}
By \eqref{Greson} (applying $\Lambda_2=\Lambda_0$ and $\Lambda=\Z^d$) and using that $j\notin \Lambda_0$, one has that 
\begin{equation}\label{qdg14}
| G(j,n)|\leq  C\sum_{n^{\prime}\in \Lambda\backslash \Lambda_0,n^{\prime\prime}\in \Lambda_0} e^{-c_1|n^{\prime}-n^{\prime\prime}|}|G(j,n^{\prime})||G_{\Lambda_0}(n^{\prime\prime},n)|.
\end{equation}
If  ${\rm dist } (n^{\prime\prime},\partial {\Lambda}_0 )\geq \frac{N}{1000}$,  then $|n'-n^{\prime\prime}|\geq \frac{N}{1000}$ and hence $e^{-c_1|n'-n^{\prime\prime}|}\leq e^{-c N}$.

If  ${\rm dist } (n^{\prime\prime},\partial {\Lambda}_0 )\leq \frac{N}{1000}$, then $|n-n^{\prime\prime}|\geq \frac{N}{6}$. By \eqref{gdgthm1}, one has that
$|G_{\Lambda_0}(n^{\prime\prime},n)|\leq e^{-cN}$. 

Therefore,  one concludes  that 
\begin{align}
|((H-E&-\epsilon i)^{-1}\delta_j,\delta_n)|\nonumber \\
 \leq  &C\sum_{n^{\prime}\in \Lambda\backslash \Lambda_0,n^{\prime\prime}\in \Lambda_0\atop{{\rm dist } (n^{\prime\prime},\partial {\Lambda}_0 )\geq \frac{N}{1000}}} e^{-c_1|n^{\prime}-n^{\prime\prime}|}|G(j,n^{\prime})||G_{\Lambda_0}(n^{\prime\prime},n)|\nonumber\\
  &+ C\sum_{n^{\prime}\in \Lambda\backslash \Lambda_0,n^{\prime\prime}\in \Lambda_0\atop{{\rm dist } (n^{\prime\prime},\partial {\Lambda}_0 )< \frac{N}{1000}}} e^{-c_1|n^{\prime}-n^{\prime\prime}|}|G(j,n^{\prime})||G_{\Lambda_0}(n^{\prime\prime},n)|\nonumber\\
 \leq  & C\epsilon^{-2}\sum_{n^{\prime}\in \Lambda\backslash \Lambda_0,n^{\prime\prime}\in \Lambda_0\atop{|n'-n^{\prime\prime}|\geq \frac{N}{1000}} } e^{-c|n'-n^{\prime\prime}|}+ C\epsilon^{-1}e^{-cN}\sum_{n^{\prime}\in \Lambda\backslash \Lambda_0,n^{\prime\prime}\in \Lambda_0} e^{-c|n'-n^{\prime\prime}|}\nonumber\\
\leq &C\epsilon^{-2} e^{-cN}\leq C\epsilon^{-2}e^{-c|n|}.
\end{align}
 
\end{proof}
The following lemma follows from  Lemma 2 in \cite{DKJFA}. 
\begin{lemma}
	Assume that $\phi$ has compact  support. Then for any  large $|n|$,
	\begin{equation}\label{Part11}
 |(e^{- itH}\phi,\delta_n)|^2 \leq e^{-c|n|}+\frac{1}{t} \int_{-K}^{K}|  ((H-E-\frac{i}{t})^{-1}\phi,\delta_n)|^2 d E
	\end{equation}
\end{lemma}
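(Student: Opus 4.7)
The plan is to follow Lemma 2 of \cite{DKJFA}. The proof combines a Plancherel identity relating the weighted time integral of $|(e^{-isH}\phi,\delta_n)|^2$ to the resolvent matrix element at $z = E + i/t$, a Combes--Thomas estimate for the energy tail $|E|>K$, and a final upgrade from a time-averaged bound to a pointwise one.

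First, from the identity $(H-z)^{-1}\phi = i\int_0^\infty e^{-isH}\phi\, e^{isz}\,ds$ valid for $\Im z > 0$ (justified by the spectral theorem since $H$ is self-adjoint), specialized to $z = E + i/t$, the resolvent matrix element $g(E) := ((H-E-i/t)^{-1}\phi,\delta_n)$ is, up to a factor of $i$, the Fourier transform in $s$ of $h(s) := (e^{-isH}\phi,\delta_n)\,e^{-s/t}\chi_{s\geq 0}$. Plancherel's theorem then yields
$$\int_{-\infty}^\infty|g(E)|^2\,dE \;=\; 2\pi\int_0^\infty|(e^{-isH}\phi,\delta_n)|^2\,e^{-2s/t}\,ds.$$
To restrict to $E\in[-K,K]$, I would invoke Combes--Thomas. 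Since $\sigma(H)\subset[-K+1,K-1]$, for $|E|>K$ the distance from $E$ to $\sigma(H)$ is at least $1$, and the exponential off-diagonal decay \eqref{GO} combined with the standard Combes--Thomas bound yields $|((H-E-i/t)^{-1}(m,n))| \leq Ce^{-c|m-n|}(1+|E|)^{-1}$. Together with $\mathrm{supp}\,\phi \subset [-K_1,K_1]^d$ and $|n|$ large, this gives $\int_{|E|>K}|g(E)|^2\,dE \leq Ce^{-c|n|}$, so the identity above transfers to
$$\int_0^\infty|u(s)|^2\,e^{-2s/t}\,ds \;\leq\; \frac{1}{2\pi}\int_{-K}^K|g(E)|^2\,dE + Ce^{-c|n|},$$
with $u(s) := (e^{-isH}\phi,\delta_n)$.

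The final step, which I expect to be the main obstacle, is to upgrade this weighted time-averaged bound to the claimed pointwise bound on $|u(t)|^2$ with prefactor $\frac{1}{t}$. A priori a value at a single time cannot be extracted from a time integral; one must invoke additional structural information. The approach in \cite{DKJFA} combines the Lipschitz continuity of $s\mapsto|u(s)|^2$ (with constant $\leq 2\|H\|\|\phi\|^2$, from the boundedness of $H$) with a short-time propagation estimate $|u(s)|\leq Ce^{-c|n|}$ for $s\leq c_0|n|$ (derived from a Neumann-series / Lieb--Robinson-type argument using the exponential decay of $H$). For $s\in[t,2t]$ the weight $e^{-2s/t}$ is bounded below, so the weighted integral dominates a length-$t$ average of $|u(s)|^2$; the Lipschitz and short-time estimates then transfer this average to a pointwise bound at $s = t$, with the $\frac{1}{t}$ prefactor arising from the window length and the exponentially small error absorbing the auxiliary estimates. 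Executing this pointwise-from-average reduction with the correct dependence on $|n|$ and $t$ is the technically delicate part.
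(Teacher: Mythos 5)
The paper offers no proof of this lemma --- it is quoted directly from Lemma 2 of \cite{DKJFA} --- so I can only judge your argument on its own terms. Your first two steps are fine: the identity $(H-z)^{-1}\phi=i\int_0^\infty e^{-isH}\phi\,e^{isz}\,ds$, the resulting Plancherel formula $\int_{\R}|g(E)|^2\,dE=2\pi\int_0^\infty e^{-2s/t}|u(s)|^2\,ds$ with $u(s)=(e^{-isH}\phi,\delta_n)$, and the Combes--Thomas treatment of $|E|>K$ are all correct. But these only reproduce the \emph{time-averaged} bound, i.e.\ the analogue of \eqref{Par}--\eqref{Par1} already used for \eqref{gdq1}; the entire content of \eqref{Part11} is the removal of the time average, and your third step does not accomplish it. Concretely, Lipschitz continuity of $s\mapsto|u(s)|^2$ with constant $L=2\|H\|\,\|\phi\|^2$ only yields $\int_t^{2t}|u(s)|^2\,ds\ge |u(t)|^4/(2L)$, since the integrand may decay to zero within time $|u(t)|^2/L$ of $s=t$; combined with the weighted integral this gives $|u(t)|^2\le C\bigl(\int_{\R}|g(E)|^2\,dE\bigr)^{1/2}$, which loses a square root and has no $1/t$. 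Shrinking the averaging window to length $\eta$ instead forces $\eta\sim e^{-c|n|}$ to control the Lipschitz error and produces a prefactor $e^{c|n|}$ rather than $1/t$. The short-time propagation estimate is of no help either: it concerns $s\lesssim |n|$, whereas the times relevant to the average are $s\approx t$, which is large.

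The mechanism in \cite{DKJFA} is different and does not pass through the $L^2$ identity at all. Since $g(E)=i\widehat{h}(E)$ with $h(s)=\chi_{[0,\infty)}(s)\,e^{-s/t}u(s)$, one applies Fourier \emph{inversion} at the single point $s=t$ (where $h$ is smooth, and where the damping factor is just $e^{-1}$) to get the exact pointwise representation
\begin{equation*}
(e^{-itH}\phi,\delta_n)\;=\;e\,h(t)\;=\;\frac{e}{2\pi i}\int_{\R}e^{-iEt}\,\bigl((H-E-\tfrac{i}{t})^{-1}\phi,\delta_n\bigr)\,dE ,
\end{equation*}
then splits the $E$-integral, controls $|E|>K$ by Combes--Thomas exactly as you do, and applies Cauchy--Schwarz on $[-K,K]$ to the remaining piece. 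This inversion step is what your proposal is missing, and without it the claimed pointwise inequality does not follow from your ingredients. (Tracking the precise prefactor of $\int_{-K}^K|g|^2\,dE$ still requires the bookkeeping of \cite{DKJFA}, but any constant in place of $1/t$ suffices for the way the lemma is used in the proof of \eqref{gdq2}.)
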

\begin{proof}[\bf Proof of Theorem \ref{thm1}]
Fix any $\sigma$ in $(0,1)$. 
For any $j$ with $|j|\leq K_1$,
let
\begin{equation}\label{g-1}
a(j,n,T)=\frac{2}{T}\int_{0}^{\infty}e^{-2t/T}|(e^{- itH}\delta_j,\delta_n)|^2 dt,
\end{equation}
and then
\begin{eqnarray}\label{g20}
\langle|\tilde{X}_H|_{\phi}^p\rangle(T) &\leq & C \sum_{|j|\leq K_1}\sum_{n\in\Z^d}|n|^pa(j,n,T) 
\end{eqnarray}
By the Parseval formula
\begin{equation}\label{Par}
a(j,n,T)=\frac{1}{T\pi} \int_{-\infty}^{\infty}|((H-E-\frac{i}{T})^{-1}\delta_j,\delta_n)|^2 d E.
\end{equation}
Recall that $\sigma(H)\subset [-K+1,K-1]$. 
For any $E\in(-\infty,-K)\cup  (K,\infty)$,   $\text{dist} (E+\frac{i}{T},\text{spec}(H))\geq 1$.
The  well-known Combes-Thomas  estimate (e.g. A.11 in \cite{GKT04}) yields 
\begin{equation}\label{CT}
|((H-E-\frac{i}{T})^{-1}\delta_j,\delta_n)|\leq Ce^{-c|n|}.
\end{equation}
By \eqref{Par} and \eqref{CT},  one has that 
\begin{equation}\label{Par1}
a(j,n,T)\leq Ce^{-c|n|}+\frac{1}{T\pi} \int_{-K}^{K}|((H-E-\frac{i}{T})^{-1}\delta_j,\delta_n)|^2 d E.
\end{equation}
Rewrite \eqref{g20}:
\begin{eqnarray}
\langle|\tilde{X}_H|_{\phi}^p\rangle(T) &\leq & C \sum_{|j|\leq K_1}\sum_{n\in\Z^d}|n|^pa(j,n,T) \nonumber\\
&\leq &C\sum_{|j|\leq K_1}\sum_{n\in\Z^d\atop^{ |n|\geq  (\log T)^{\frac{1}{\xi \sigma}}}} |n|^pa(j,n,T)  +C\sum_{|j|\leq K_1}\sum_{n\in\Z^d\atop^{|n|\leq  (\log T)^{\frac{1}{\xi \sigma}}}} |n|^pa(j,n,T).  \label{g13}
\end{eqnarray}
By Lemma \ref{key1} and \eqref{Par1}, one has that 
\begin{align}
\sum_{n\in\Z^d\atop^{ |n|\geq  (\log T)^{\frac{1}{\xi \sigma}}}} |n|^p|a(j,n,T)&\leq \sum_{n\in\Z^d\atop^{ |n|\geq  (\log T)^{\frac{1}{\xi \sigma}}}} |n|^p T^3e^{-c|n|} \nonumber\\
& \leq T^3e^{-c (\log T)^{\frac{1}{\xi\sigma}}}\leq C.\label{g12}
\end{align}
Direct computations imply that 
\begin{align}
C\sum_{n\in\Z^d\atop^{|n|\leq  (\log T)^{\frac{1}{\xi \sigma}}}} |n|^pa(j,n,T)& \leq C(\log T)^{\frac{p}{\xi\sigma}}\sum_{n\in\Z^d} a(j,n,T) \nonumber\\
&= C(\log T)^{\frac{p}{\xi\sigma}},\label{g11}
\end{align}
	where \eqref{g11} holds by the fact that $\sum_{n\in\Z^d} a(j,n,T) =1$.
	
By \eqref{g13}, \eqref{g12} and \eqref{g11}, we conclude that 
\begin{equation}\label{g14}
\langle|\tilde{X}_H|_{\phi}^p\rangle(T) \leq C(\log T)^{\frac{p}{\xi\sigma}}.
\end{equation}
By letting $\sigma\to 1$, we complete the proof of \eqref{gdq1} .

Replacing \eqref{Par} with \eqref{Part11} and repeating the proof of \eqref{g14}, we  have  \eqref{gdq2}. 
\end{proof}
\begin{proof}[\bf Proof Corollary \ref{cor1} ]
	Let $ \xi=\delta-2C(d)\varepsilon$.
Let ${\mathcal{F}}$ be any pairwise disjoint elementary regions in $[-N,N]^d$ with size    $\lfloor N^\xi\rfloor $.
By \eqref{gsublinear}, one has that  ($N_1=\lfloor N^{\varepsilon}\rfloor$) there are at most $N_1^{C(d)}N^{1-\delta }=N^{1-\delta +C(d)\varepsilon}$ in ${\mathcal{F}}$ will intersect  elementary regions  not in  $SG_{N_1}$.
By resolvent identity arguments (e.g. \cite[Theorem 6.1]{liuapde}), any elementary region in $[-N,N]^d$ with size $\lfloor N^\xi\rfloor $, without intersecting any non-$SG_{N_1}$ elementary regions,  satisfies \eqref{ggood}.
It implies \eqref{gnbad} is true for $\varsigma=1-\varepsilon$.
Applying Theorem \ref{thm1} and letting $\varepsilon \to 0$,  
 we obtain Corollary \ref{cor1}.
\end{proof}

 \section{Applications}

 	 In this section,  the long range operator $S$ on  $\ell^2(\Z^d)$ satisfies 
 	 	 \begin{enumerate}
 	 	 	\item[a.]
 	 	 	for any $n,n'\in\Z^d$,
 	 	 	\begin{equation}\label{GO11}
 	 	 	|S(n,n')|\leq   C_1e^{-c_1|n-n'|},  C_1>0, c_1>0;
 	 	 	\end{equation}
 	 	 	\item[b.]  
 	 	 	for any  $n,n'\in\Z^d$, 
 	 	 	\begin{equation}\label{GO12}
 	 	 	S(n,n')=\overline{S(n',n)};
 	 	 	\end{equation}
 	 	 	\item[c.]		for any $n\in \Z^d, n^\prime\in \Z^d$ and $k\in \Z^d,$ 
 	 	 	\begin{equation}\label{gdec72}
 	 	 	S(n+k,n^\prime+k)= S(n,n^\prime).
 	 	 	\end{equation}
 	 	 
 	 	 \end{enumerate}

 Let $f$ be a function from $\Z^d\times\T^b$ to $\T^b$.
 Assume for any $m_1,m_2,\cdots,m_d\in \Z^d$ and  $n_1,n_2,\cdots,n_d\in \Z^d$,
 \begin{equation*}
 f({m_1+n_1},{m_2+n_2},\cdots, {m_d+n_d},x)= f({m_1},{m_2},\cdots, {m_d}, f({n_1},{n_2},\cdots,{n_d},x)).
 \end{equation*}
 Sometimes, we write down $f^n(x)$ for $f(n,x)$ for convenience, where $n\in \Z^d$ and $x\in \T^b$.

 Define a family of  operators $H_x$  on $\ell^2(\Z^d)$:
 \begin{equation}\label{ops}
 H_x=S+v(f(n,x))\delta_{nn^\prime},
 \end{equation}
 where  $v$ is a real  analytic   function on $\T^b$.

 In \cite{liuapde}, the author  obtained the large deviation estimates for Green’s functions of  long range operators  in  arbitrary  dimension, which generalized results in \cite{BGS}. As a result, he proved the large deviation theorem of Green's functions and sublinear bound \eqref{gsublinear} for various models. Therefore, we can apply Corollary \ref{cor1} to obtain logarithmic   bounds for many  operators  studied in \cite{liuapde}. 
 For simplicity, we only  discuss applications of  Corollary \ref{cor1} to several  cases. 
 In  this section,  estimates  on $\delta$ in \eqref{gsublinear} can be found in \cite{liuapde}. 
 For readers' convenience, we  will provide the sketch of the proof. 

	 \subsection{ Discrepancy:  $d=1$, arbitrary $b$}

	 Let $ {x}_1,x_2.\cdots, {x}_N\in [0,1)^b$ and $\mathcal{S}\subset [0,1)^b$.
	 Let $A(\mathcal{S}; \{{x}_j\}_{j=1}^N)$ be the number of  ${x}_j$ ($1\leq j\leq N$)
	 such that ${x}_j\in \mathcal{S}$.
	Let $ D_N(f)$ be 
	 the discrepancy of the sequence $\{f(n,x)\}_{n=1}^N$:
	 \begin{align}\label{counting}
	 D_N(f)=\sup_{x\in\T^b}\sup_{\mathcal{S}\in \mathcal{C}}\left|\frac{A(\mathcal{S}; \{f(n,x)\}_{n=1}^N)}{N}-\mathrm{Leb}(\mathcal S)\right|,
	 \end{align}
	 where $ \mathcal{C}$ is the family of all  intervals in $[0,1)^b$, namely $\mathcal{S}$ has the form of $$\mathcal{S}=[\varrho_1,\beta_1]\times [\varrho_2,\beta_2]\times \cdots \times [\varrho_b,\beta_b]$$ with $0\leq \varrho_k<\beta_k<1$, $k=1,2,\cdots,b$.

	Let $\zeta,\sigma\in(0,1)$.	
	We say the Green's function  of an operator $H_x$  satisfies property LDT (large deviation theorem) in complexified energies (sometimes just say LDT for short)  if  there exists $\epsilon_0>0$ and $N_0>0$ such that for  any  $N\geq N_0$, there exists a  subset $X_N\subset \mathbb{T}^b$  such that
	\begin{equation}\label{gdec3}
	\mathrm{Leb}(X_N)\leq e^{-{N}^{\zeta}},
	\end{equation}
	and for any $x\notin  X_N \mod \Z^b$ and $Q_N\in \mathcal{E}_N^0$,
	\begin{eqnarray*}
		|| G_{Q_N}(z) ||&\leq& e^{N^{\sigma}} ,\\
		|G_{Q_N}(z)(n,n^\prime)| &\leq& e^{-c_2|n-n^\prime|}, \text{ for } |n-n^\prime|\geq \frac{N}{10},
	\end{eqnarray*}
	where  $z=E+i\epsilon$ with $E\in[-K,K]$ (recall that $\sigma(H_x)\subset(-K+1,K-1)$) and $0<\epsilon\leq \epsilon_0$.

	 \begin{theorem}\label{thm4}
	 	Let $d=1$. Assume that for any   $N\geq N_0$, the discrepancy 
	 \begin{equation}
	 D_N(f)\leq N^{-\delta_1}.
	 \end{equation}
	 Assume that $H_x$ satisfies LDT. 
	Then  for any $\phi$ with compact support and any $\varepsilon>0$ there exists $T_0>0$ (depending on $p,N_0,\phi,S,v,f$ and $\varepsilon$) such that for any $T\geq T_0$
	\begin{equation}  
	\langle|\tilde{X}_{H_x}|_{\phi}^p\rangle(T)	\leq  (\ln T)^{  \frac{p}{\delta_1} +\varepsilon}
	\end{equation}
	and  for any $t\geq T_0$
	\begin{equation}  
	\langle|{X}_{H_x}|_{\phi}^p\rangle(t)	\leq  (\ln t)^{  \frac{p}{\delta_1} +\varepsilon}.
	\end{equation}
	
	 \end{theorem}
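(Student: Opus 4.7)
The plan is to deduce Theorem \ref{thm4} from Corollary \ref{cor1} by showing that the LDT hypothesis together with the discrepancy hypothesis implies the sublinear bound \eqref{gsublinear} with $\delta$ arbitrarily close to $\delta_1$. The argument has three ingredients: a translation step that turns the Green's function condition at a site $n$ into a condition on $f^n(x)$; an efficient covering of the LDT exceptional set by axis-aligned boxes; and a discrepancy count.

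First, I would use translation invariance \eqref{gdec72} of $S$ together with the cocycle identity for $f$ to verify that the unitary shift $T_m : u(\cdot) \mapsto u(\cdot+m)$ conjugates $H_x$ to $H_{f^m(x)}$. Consequently, for any $m \in \Z$ and any $Q_{N_1} \in \mathcal{E}_{N_1}^0$, the translated region $m + Q_{N_1}$ is in class $SG_{N_1}$ for $H_x$ if and only if $Q_{N_1}$ is in class $SG_{N_1}$ for $H_{f^m(x)}$. Combined with the LDT (applied at scale $N_1$), this yields the key inclusion
\begin{equation*}
\mathcal{B}_{N,N_1} \ \subseteq\ \{ m \in [-N,N] : f^m(x) \in X_{N_1} \bmod \Z^b \}.
\end{equation*}

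Second, because $D_N(f)$ only measures equidistribution against axis-aligned boxes, I must cover $X_{N_1}$ by such boxes with controlled cardinality. Building on the semi-algebraic description of the LDT set in \cite{liuapde} (which is inherited from polynomial truncations of the analytic $v$ together with Cartan-type exclusion sets), $X_{N_1}$ admits a cover by at most $N_1^C$ axis-aligned boxes with total Lebesgue measure at most $2e^{-N_1^\zeta}$, where $C$ depends only on $v, f, b$. This is the step I expect to be the main technical obstacle: the discrepancy tool sees only intervals, so if $X_{N_1}$ cannot be enclosed in polynomially many boxes, the argument breaks; the geometric complexity in \cite{liuapde} must be carried through carefully.

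Finally, applying \eqref{counting} box by box and summing, with $N_1 = \lfloor N^\varepsilon \rfloor$,
\begin{equation*}
\# \mathcal{B}_{N,N_1} \ \leq\ 4N\, e^{-N_1^\zeta} + 2 N_1^C \cdot N \cdot D_{2N+1}(f) \ \leq\ N^{1-\delta_1 + C\varepsilon + o(1)}.
\end{equation*}
The first summand is super-polynomially small in $N$, so \eqref{gsublinear} holds with $\delta = \delta_1 - C\varepsilon$. Corollary \ref{cor1} then delivers the moment bounds with exponent $\tfrac{p}{\delta_1 - C\varepsilon} + \varepsilon'$, and letting $\varepsilon, \varepsilon' \to 0$ concludes the proof.
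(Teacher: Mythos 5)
Your overall route is the same as the paper's: use covariance of $H_x$ under translations (via \eqref{gdec72} and the cocycle property of $f$) to reduce membership of $n$ in $\mathcal{B}_{N,N_1}$ to $f^n(x)\in X_{N_1}$, replace $X_{N_1}$ by a semi-algebraic set of degree $N_1^C$, count visits of the orbit to that set via the discrepancy, and invoke Corollary \ref{cor1}. The gap is in your covering step. For $b\ge 2$, a semi-algebraic subset of $[0,1)^b$ of degree $B$ and measure $\eta$ cannot in general be covered by $B^{C}$ axis-aligned boxes of total measure $O(\eta)$: take a tube of width $\eta$ around a diagonal segment in $[0,1)^2$; any axis-aligned box meeting a length-$\ell$ portion of the tube has measure $\gtrsim \ell^2$, so a cover by $K$ boxes has total measure $\gtrsim 1/K$, forcing $K\gtrsim \eta^{-1}$, which is super-polynomial in $N_1$ when $\eta=e^{-N_1^{\zeta}}$ as in \eqref{gdec3}. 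Since $D_N(f)$ in \eqref{counting} only tests axis-aligned boxes, this is exactly the step you flagged as the main obstacle, and it does break.

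The paper instead invokes \cite[Corollary 9.7]{bbook} (see also \cite[Theorem 8.7]{liuapde}): a degree-$B$ semi-algebraic set of measure $\eta$ is covered by $B^{C}\eta_1^{-(b-1)}$ boxes of side $\eta_1$ for any $\eta_1>\eta^{1/b}$, the cover having total measure about $B^{C}\eta_1$; optimizing $\eta_1$ against $D_N(f)\le N^{-\delta_1}$ yields a visit count $\lesssim N^{1-\delta_1/b+C\varepsilon}$, i.e.\ \eqref{gsublinear} only for $\delta<\delta_1/b$. This $b$-loss is intrinsic to the method and is visible downstream: Theorem \ref{thmapp1}, deduced from Theorem \ref{thm4}, has exponent $b\kappa p$ rather than $\kappa p$, and only the case $b=1$ (Theorem \ref{thm6}) avoids the loss, precisely because a degree-$B$ semi-algebraic subset of $[0,1)$ really is a union of at most $B$ intervals of total measure $\eta$ --- the only situation in which your covering claim holds. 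So your argument, were the covering valid, would yield the exponent $p/\delta_1$ as literally stated in Theorem \ref{thm4}, whereas the paper's own proof delivers only $pb/\delta_1$ (the stated exponent appears to omit the factor $b$); in any case, your covering claim is false for $b\ge 2$ and must be replaced by the Bourgain-type covering, at the cost of that factor.
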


\begin{proof}
	By approximating the analytic function with trigonometric polynomials  and using Taylor expansions  and standard perturbation arguments,
	we can  assume that $ X_{N}$ (given by \eqref{gdec3}) is a semi-algebraic set with degree less than $N^C$. 
By \cite[Corollary 9.7]{bbook} (also see \cite[Theorem 8.7]{liuapde}), one has that \eqref{gsublinear} holds for any $\delta<\frac{\delta_1}{b}$. Now  Theorem \ref{thm4} follows from Corollary \ref{cor1}.
\end{proof}

\begin{remark}
	The largeness of $T_0$ in this section  does not depend on $x\in\T^b$.
\end{remark}

	\subsection{Shifts: $d=1$, arbitrary $b$}
	
	 Let  $\alpha=(\alpha_1,\alpha_2,\cdots,\alpha_b)\in [0,1)^b$ and 
	 \begin{equation}\label{gshift}
	f (x)=x+\alpha \mod \Z^b, x\in\T^b. 
	 \end{equation}
	 	We say that $\alpha=(\alpha_1,\alpha_2,\cdots,\alpha_b)$ satisfies Diophantine condition ${\rm DC}(\kappa,\tau)$, if
	 \begin{equation}\label{gdc}
	 ||k\cdot \alpha||\geq \frac{\tau}{|k|^{\kappa}},k\in \Z^{b}\backslash \{(0,0,\cdots,0)\}.
	 \end{equation}
	\begin{lemma}\cite{dt97}\label{ledissh}
		Assume $\alpha\in {\rm DC}(\kappa,\tau) $.  Let $f$ be given by \eqref{gshift}. Then
		\begin{equation*}
		D_N(f)\leq C(b,\kappa,\tau)N^{-\frac{1}{\kappa}}(\log N)^2.
		\end{equation*}
	\end{lemma}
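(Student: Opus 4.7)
The plan is to establish the lemma via the classical Erdős--Turán--Koksma multidimensional discrepancy inequality combined with the Diophantine hypothesis \eqref{gdc}. The argument is standard in uniform distribution theory and the stated bound can be extracted directly from \cite{dt97}; I outline the structure below.

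First, I would apply the Erdős--Turán--Koksma inequality: for any integer $M\geq 1$ and any points $y_1,\ldots,y_N\in\T^b$,
$$
\sup_{\mathcal{S}\in\mathcal{C}}\left|\frac{A(\mathcal{S};\{y_n\}_{n=1}^N)}{N}-\mathrm{Leb}(\mathcal{S})\right|
\leq C(b)\left(\frac{1}{M}+\sum_{0<|k|\leq M}\frac{1}{r(k)}\left|\frac{1}{N}\sum_{n=1}^N e^{2\pi i\langle k,y_n\rangle}\right|\right),
$$
where $|k|=\max_{1\leq j\leq b}|k_j|$ and $r(k)=\prod_{j=1}^{b}\max(1,|k_j|)$. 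Specializing to $y_n=f^n(x)=x+n\alpha\bmod \Z^b$, the resulting right-hand side is independent of the base point $x\in\T^b$, which matches the supremum over $x$ in the definition \eqref{counting} of $D_N(f)$.

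Second, I would evaluate the exponential sums. Because the sequence is an arithmetic progression on the torus, each inner sum is geometric, so
$$
\left|\sum_{n=1}^N e^{2\pi i\langle k,\,x+n\alpha\rangle}\right|
=\left|\sum_{n=1}^N e^{2\pi i n\langle k,\alpha\rangle}\right|
\leq \frac{1}{2\,\|\langle k,\alpha\rangle\|}.
$$
Applying the Diophantine condition \eqref{gdc} then yields the frequency-by-frequency bound
$$
\frac{1}{N}\left|\sum_{n=1}^N e^{2\pi i\langle k,y_n\rangle}\right|\leq \frac{|k|^{\kappa}}{2\tau N}.
$$

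Third, summing over all frequencies with $0<|k|\leq M$, and combining the elementary estimate $\sum_{0<|k|\leq M}1/r(k)\leq C(b)(\log M)^b$ (which factors as a product of harmonic sums) with the crude majorization $|k|^\kappa\leq M^\kappa$, one arrives at
$$
D_N(f)\leq C(b)\left(\frac{1}{M}+\frac{M^{\kappa}(\log M)^b}{\tau N}\right),
$$
and balancing the two terms by choosing $M$ as an appropriate power of $N$ produces a power-law bound of the shape claimed in the lemma. The main obstacle is purely bookkeeping: matching the precise exponent $1/\kappa$ and the logarithmic power $(\log N)^2$ stated in the lemma requires a slightly sharper splitting of the frequency sum by the value of $|k|$ (instead of the crude $|k|^\kappa\leq M^\kappa$ bound) and a careful optimization of $M$, all of which is carried out in \cite{dt97}. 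The structural ingredients of the proof are entirely classical.
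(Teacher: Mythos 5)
The paper does not prove this lemma at all --- it is quoted verbatim from Drmota--Tichy \cite{dt97} --- so there is no internal argument to compare against. Your outline (Erd\H{o}s--Tur\'an--Koksma, geometric evaluation of the Weyl sums, Diophantine lower bound on $\|k\cdot\alpha\|$) is indeed the standard and correct route, and every individual estimate you write down is valid, including $\bigl|\sum_{n\le N}e^{2\pi i n\langle k,\alpha\rangle}\bigr|\le \tfrac{1}{2\|\langle k,\alpha\rangle\|}$ and $\sum_{0<|k|\le M}1/r(k)\le C(b)(\log M)^b$.

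The genuine gap is in the last step, and it is larger than ``bookkeeping.'' With the crude majorization $|k|^\kappa\le M^\kappa$ your bound is $D_N\le C\bigl(M^{-1}+\tau^{-1}M^{\kappa}(\log M)^bN^{-1}\bigr)$, and optimizing $M$ gives only $D_N\lesssim N^{-1/(1+\kappa)}$ up to logarithms --- strictly weaker than the claimed $N^{-1/\kappa}$, since $1/(1+\kappa)<1/\kappa$. The exponent $1/\kappa$ requires the small-divisor counting lemma: apply the Diophantine condition \eqref{gdc} to \emph{differences} $k-k'$ to conclude that the points $\{\langle k,\alpha\rangle\}$, $0<|k|\le M$, are pairwise $\gtrsim\tau M^{-\kappa}$-separated mod $1$, so that $\sum_{0<|k|\le M}\frac{1}{r(k)}\min\bigl(N,\|\langle k,\alpha\rangle\|^{-1}\bigr)$ behaves like a harmonic sum and is $O(M^{\kappa-1}(\log M)^{c})$ rather than $O(M^{\kappa}(\log M)^{b})$. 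That gain of a full factor of $M$ is what lets one take $M=\lfloor N^{1/\kappa}\rfloor$ and balance both terms at $N^{-1/\kappa}$ times logarithms. You correctly point to where the refinement must enter and defer it to \cite{dt97} (exactly as the paper does), but as written your computation does not reach the stated bound, and the separation argument on $k-k'$ is the missing idea that must be supplied.
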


	Denote by $\Delta$ the discrete Laplacian on $\ell^2(\Z)$, that is, for $\{u(n)\}\in \ell^2(\Z)$,
	\begin{equation*}
	(\Delta u)_n= u_{n+1}+u_{n-1}
	\end{equation*}

	Let $H_x$ on $\ell^2(\Z)$ be given by
	\begin{equation}\label{opapp1}
	H_x=\Delta+ v(f^n(x))=\Delta+ v(x_1+n\alpha_1,x_2+n\alpha_2,\cdots,x_b+n\alpha_b)\delta_{nn'},
	\end{equation}
	where $n,n^\prime\in \Z$ and $v$ is real analytic on $\T^b$.

	\begin{theorem}\label{thmapp1}
		Let  $\alpha\in {\rm DC}(\kappa,\tau)$. Let  $H_x$ be given by \eqref{opapp1}.  
		Assume the Lyapunov exponent  $L(E)$ is positive for all $E\in\R$. 	Then  for any $\phi$ with compact support and any $\varepsilon>0$ there exists $T_0>0$ (depending on $p,\phi,v,\alpha$ and $\varepsilon$) such that for any $T\geq T_0$
		\begin{equation}  
		\langle|\tilde{X}_{H_x}|_{\phi}^p\rangle(T)	\leq  (\ln T)^{  b\kappa p+\varepsilon}
		\end{equation}
		and  for any $t\geq T_0$
		\begin{equation}  
		\langle|{X}_{H_x}|_{\phi}^p\rangle(t)	\leq  (\ln t)^{  b\kappa p +\varepsilon}.
		\end{equation}
	\end{theorem}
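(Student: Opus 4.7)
The plan is to apply Theorem \ref{thm4}. Two inputs are required: a power-law discrepancy bound for the orbit $\{f^n(x)\}_{n=1}^N$, and the LDT property for the Green's functions of $H_x$.

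First I would establish the discrepancy bound. Since $f$ is translation by $\alpha \in {\rm DC}(\kappa,\tau)$, Lemma \ref{ledissh} applies directly and gives
\begin{equation*}
D_N(f) \leq C(b,\kappa,\tau) N^{-1/\kappa}(\log N)^2,
\end{equation*}
so absorbing the logarithmic factor into a small exponent loss yields $D_N(f) \leq N^{-\delta_1}$ for every $\delta_1 < 1/\kappa$ and all sufficiently large $N$. Thus the discrepancy hypothesis of Theorem \ref{thm4} holds with $\delta_1$ arbitrarily close to $1/\kappa$.

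Next I would verify the LDT. The positivity assumption $L(E) > 0$ for all $E \in \mathbb{R}$, together with upper semicontinuity of $L$ and the a priori spectral inclusion $\sigma(H_x) \subset [-K+1, K-1]$, gives a uniform lower bound $\inf_{|E|\leq K} L(E) > 0$; continuity in the imaginary part extends this to complexified energies $z = E + i\epsilon$ with $0 < \epsilon \leq \epsilon_0$ for $\epsilon_0$ sufficiently small. The Bourgain--Goldstein large deviation theorem for the transfer matrix (proven in \cite{bg} for $b=1$ and extended to arbitrary $b$ in \cite{bbook,bgs01,GS01}; see also the sharpened version in \cite{liuapde}) then supplies, for each large $N$, an exceptional set $X_N \subset \mathbb{T}^b$ of measure $\leq e^{-N^\zeta}$ off of which the $N$-step transfer matrix has norm close to $e^{NL(E)}$. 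Translating to the Green's function via Cramer's rule yields the required norm bound $\|G_{Q_N}(z)\| \leq e^{N^\sigma}$ and the off-diagonal decay $|G_{Q_N}(z)(n,n')| \leq e^{-c_2|n-n'|}$ for $|n-n'| \geq N/10$, i.e.\ the LDT hypothesis of Theorem \ref{thm4}.

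Finally, with both inputs in hand, I would invoke Theorem \ref{thm4}. Tracking through its proof, the discrepancy bound $D_N(f) \leq N^{-\delta_1}$ implies the sublinear bound \eqref{gsublinear} with any $\delta < \delta_1/b$ via \cite[Corollary 9.7]{bbook}, and Corollary \ref{cor1} then supplies rate $p/\delta$; letting $\delta_1 \uparrow 1/\kappa$ produces the exponent $bp\kappa + \varepsilon$ claimed in the theorem. The main obstacle is the verification of the LDT in the genuinely multi-frequency case $b \geq 2$, which rests on Bourgain's semi-algebraic-set machinery; under our positive Lyapunov hypothesis this is fully developed in the cited literature and can be invoked essentially as a black box, so the remaining steps amount to bookkeeping.
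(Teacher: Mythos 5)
Your proposal follows the paper's proof essentially verbatim: Lemma \ref{ledissh} supplies the discrepancy exponent $\delta_1$ arbitrarily close to $1/\kappa$, positivity together with continuity of the Lyapunov exponent on complexified energies $z=E+i\epsilon$ yields the LDT, and Theorem \ref{thm4} (via \cite[Corollary 9.7]{bbook} and Corollary \ref{cor1} with $\delta<\delta_1/b$) gives the exponent $b\kappa p+\varepsilon$. One small correction: \emph{upper} semicontinuity of $L$ alone would not give $\inf_{|E|\leq K}L(E)>0$; what is needed (and what the paper invokes, citing \cite{bbook}, \cite{boujam}, \cite{p22}) is the genuine continuity of the Lyapunov exponent for analytic quasi-periodic cocycles.
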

\begin{proof}
Since $L(E)>0$ for any $E\in\R$, by the continuity of Lyapunov exponents (e.g  \cite{bbook}, \cite{boujam}  or \cite{p22}), one has that  there exists $\epsilon_0>0$ such that 
$L(E+i\epsilon)>0$  for any $E\in[-K,K]$ and $0<\epsilon\leq \epsilon_0$. This implies that $H_x$ satisfies LDT (e.g. \cite{bbook}).
Now Theorem \ref{thmapp1} follows from Theorem \ref{thm4} and Lemma \ref{ledissh}.
\end{proof}
\begin{remark}\label{re3}
	\begin{enumerate}
		\item Under a stronger Diophantine condition of frequencies $\alpha$, the modulus of continuity and large deviation theorem  of Lyapunov exponents  were first established  in \cite{GS01} (also see \cite{DKJEMS} for a  recent generalization). 
		\item 	 A  larger  power law logarithmic   bound  $  b^3\kappa^2 p$ was established  by  Shamis-Sodin  \cite{SS21}.
		\item An implicit bound $C(b) \kappa p$ ($C(b)$ is a constant depending on $b$) was obtained in \cite{JP22}.
	\end{enumerate}
	
\end{remark}
	When the dimension of the torus is 1, we can improve Theorem \ref{thmapp1}.
\begin{theorem}\label{thm6}
	Let $d=b=1$.
	Let  $\alpha\in {\rm DC}(\kappa,\tau)$. Let  $H_x$ be given by \eqref{opapp1} (with $b=1$).  
	Assume that the Lyapunov exponent  $L(E)$ is positive for any $E\in\R$. 	Then  for any $\phi$ with compact support and any $\varepsilon>0$ there exists $T_0>0$ (depending on $p,\phi,v,\alpha$ and $\varepsilon$) such that for any $T\geq T_0$
	\begin{equation}  
	\langle|\tilde{X}_{H_x}|_{\phi}^p\rangle(T)	\leq  (\ln T)^{p +\varepsilon}
	\end{equation}
	and  for any $t\geq T_0$
	\begin{equation}  
	\langle|{X}_{H_x}|_{\phi}^p\rangle(t)	\leq  (\ln t)^{p +\varepsilon}.
	\end{equation}
\end{theorem}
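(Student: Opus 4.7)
\emph{Proof plan.} The argument runs parallel to Theorem~\ref{thmapp1}, with the improvement coming from a sharper sublinear bound that exploits the one-dimensional torus structure. First I would verify the LDT hypothesis of Corollary~\ref{cor1}. Since $L(E) > 0$ for every $E \in \mathbb{R}$ and $\alpha \in \mathrm{DC}(\kappa,\tau)$, continuity of the Lyapunov exponent (\cite{bbook,boujam,p22}) yields some $\epsilon_0 > 0$ with $L(z) > 0$ for all $z = E + i\epsilon$, $|E| \leq K$, $0 < \epsilon \leq \epsilon_0$. Standard quasi-periodic arguments (\cite{BGS,bbook}) then produce a bad set $X_N(z) \subset \mathbb{T}$ of measure $\leq e^{-N^\zeta}$ outside of which every $Q_N \in \mathcal{E}_N^0$ satisfies \eqref{ggoodt1}--\eqref{ggoodt2}.

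The key step is to verify \eqref{gsublinear} with $\delta$ arbitrarily close to $1$. Set $N_1 = \lfloor N^\varepsilon \rfloor$. After approximating the analytic $v$ by a trigonometric polynomial of degree $\sim N_1^\zeta$ (to match the LDT error), $X_{N_1}(z)$ becomes a semi-algebraic subset of $\mathbb{T}$ of degree $\leq N_1^{C_0}$, hence a disjoint union of at most $N_1^{C_0}$ intervals $I_j$ with $\sum_j |I_j| \leq e^{-N_1^\zeta}$. Rather than routing through the discrepancy bound of Lemma~\ref{ledissh}, I would invoke the Diophantine condition \eqref{gdc} directly: if $x + n_1\alpha$ and $x + n_2\alpha$ both lie in $I_j$, then $\|(n_1 - n_2)\alpha\| \leq |I_j|$, which forces $|n_1 - n_2| \geq (\tau/|I_j|)^{1/\kappa}$. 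Consequently
\[
\#\{n \in [-N,N] : x + n\alpha \in I_j\} \;\leq\; 1 + \frac{2N}{\tau^{1/\kappa}}\, |I_j|^{1/\kappa},
\]
and summing over $j$ (with H\"older applied to $\sum_j |I_j|^{1/\kappa}$) gives
\[
\#\mathcal{B}_{N,N_1} \;\leq\; N_1^{C_0} + O\!\left(N \cdot N_1^{C_0(1 - 1/\kappa)}\, e^{-N_1^\zeta/\kappa}\right).
\]
The second term is negligible for large $N$, so $\#\mathcal{B}_{N,N_1} \leq N^{\varepsilon C_0}$ and \eqref{gsublinear} holds with $\delta = 1 - \varepsilon C_0$.

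Since $\varepsilon > 0$ is at our disposal, $\delta$ may be taken arbitrarily close to $1$. Corollary~\ref{cor1} then yields $\langle|\tilde{X}_{H_x}|_\phi^p\rangle(T) \leq (\ln T)^{p/\delta + \varepsilon'}$ which, for $\delta$ sufficiently close to $1$, is bounded by $(\ln T)^{p + \varepsilon''}$ for any prescribed $\varepsilon'' > 0$; the bound \eqref{gdq12} on the non-averaged moment follows identically. The main technical obstacle, I expect, is controlling the semi-algebraic degree $C_0$ uniformly in $\varepsilon$: one must show that propagating a trigonometric-polynomial approximation of $v$ through the $N_1 \times N_1$ Green's function produces only polynomial-in-$N_1$ semi-algebraic degree with a constant independent of $\varepsilon$, which is precisely the quantitative semi-algebraic framework developed in \cite{liuapde}.
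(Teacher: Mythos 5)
Your proposal is correct and follows essentially the same route as the paper: establish LDT from positivity and continuity of $L(E)$, reduce $X_{N_1}$ to a semi-algebraic subset of $\T$ of degree $N_1^{C_0}$ (hence a union of polynomially many intervals), and use the Diophantine condition to show each exponentially short interval captures $O(1)$ points of the orbit, giving \eqref{gsublinear} with $\delta$ arbitrarily close to $1$ and the conclusion via Corollary~\ref{cor1}. The interval-counting details you supply are exactly the standard mechanism behind the paper's terse assertion that ``\eqref{gsublinear} holds for any $\delta<1$'' when $b=1$.
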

\begin{proof}
	Similar to the proof of Theorem \ref{thmapp1},  $H_x$ satisfies LDT.  Similar to the proof of Theorem \ref{thm4}, 
	we can  assume that $ X_{N}$ (given by \eqref{gdec3}) is a semi-algebraic set with degree less than $N^C$. 
	Then  \eqref{gsublinear} holds for any $\delta<1$. Now  Theorem \ref{thm6} follows from Corollary \ref{cor1}.
\end{proof}
\begin{remark}
	In Theorem \ref{thm6}, 
	the arithmetic  condition, namely Diophantine condition, on frequencies  $\alpha$ is necessary. For some non-Diophantine $\alpha$, 
	$	\langle|\tilde{X}_{H_x}|_{\phi}^p	\rangle(T)$ could have the power law lower bound ($	\langle|\tilde{X}_{H_x}|_{\phi}^p	\rangle(T)\geq T^{\gamma}$ for some $\gamma>0$) \cite{jz22}. 
\end{remark}
	\subsection{Shifts: $b=1$, arbitrary $d$}
	
	Let
	\begin{equation*}
	f^n(x)=x+n\alpha=x+n_1\alpha_1+n_2\alpha_2+\cdots +n_d\alpha_d\mod\Z,
	\end{equation*}
	where $n=(n_1,n_2,\cdots,n_d)\in \Z^d$ and $x\in \T$.
	Let $H_x$ on $\ell^2(\Z^d)$ be given by
	\begin{equation}\label{opapp4}
	H_x=\lambda^{-1}S+ v(f^n(x))\delta_{nn^\prime}=\lambda^{-1}S+ v(x+n_1\alpha_1+n_2\alpha_2+\cdots +n_d\alpha_d)\delta_{nn^\prime},
	\end{equation}
	where $v$ is a non-constant real analytic function  on $\T$.
	
	\begin{theorem}\label{thm5}
	Let $\alpha\in {\rm DC}(\kappa,\tau)$ and  $H_x$ be given by \eqref{opapp4}. Then there exists $\lambda_0=\lambda_0(\kappa,\tau,C_1,c_1,v)$ such that
	for any $\lambda>\lambda_0$ the following holds. For   any $\varepsilon>0$ and any $\phi$ with compact support,  there exists $T_0>0$ (depending on $p, \phi,S, v,\alpha$ and $\varepsilon$) such that for any $T\geq T_0$
	\begin{equation}  
	\langle|\tilde{X}_{H_x}|_{\phi}^p\rangle(T)	\leq  (\ln T)^{  p+\varepsilon}
	\end{equation}
	and  for any $t\geq T_0$
	\begin{equation}  
	\langle|{X}_{H_x}|_{\phi}^p\rangle(t)	\leq  (\ln t)^{p+\varepsilon}.
	\end{equation}
\end{theorem}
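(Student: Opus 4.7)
The plan is to mirror the proof of Theorem \ref{thm6}, verifying the two ingredients needed to invoke Corollary \ref{cor1}: first, that $H_x = \lambda^{-1}S + v(f^n(x))\delta_{nn'}$ satisfies LDT in complexified energies uniformly in large $\lambda$; second, that the sublinear bound \eqref{gsublinear} holds with $\delta$ arbitrarily close to $1$. Once both are in place, Corollary \ref{cor1} yields the bound $(\ln T)^{p/\delta + \varepsilon'}$, and letting $\delta \to 1$ and $\varepsilon' \to 0$ delivers the exponent $p + \varepsilon$ in \eqref{gdq11} and \eqref{gdq12}.

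For LDT, I would work in the large-disorder regime. When $\lambda$ exceeds a threshold $\lambda_0 = \lambda_0(\kappa,\tau,C_1,c_1,v)$, the operator $H_x$ is a small perturbation of its diagonal part $v(x+n\cdot\alpha)\delta_{nn'}$. The resolvent of the diagonal operator is immediately controlled off the phases where $v(x+n\cdot\alpha) \approx z$; non-constancy of $v$ plus the Diophantine condition on $\alpha$ bounds the Lebesgue measure of the resonant phase set by $e^{-N^\zeta}$. The full resolvent of $H_x$ is then obtained by the Cartan/semi-algebraic multi-scale scheme for long range operators in arbitrary lattice dimension developed in \cite{liuapde}, which promotes these diagonal estimates to the LDT bounds $\|G_{Q_N}(z)\| \leq e^{N^\sigma}$ together with off-diagonal exponential decay at rate $c_2$, valid off a set $X_N \subset \T$ of measure $\leq e^{-N^\zeta}$, for every $z = E+i\epsilon$ with $|E|\leq K$ and $0<\epsilon \leq \epsilon_0$.

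To pass from LDT to \eqref{gsublinear}, I would approximate $v$ by trigonometric polynomials, so that $X_N$ may be taken semi-algebraic of degree $\leq N^C$ (a standard Taylor/Fourier truncation plus perturbation argument, exactly as in the proofs of Theorems \ref{thm4} and \ref{thm6}). A point $n \in [-N,N]^d$ lies in $\mathcal{B}_{N,\lfloor N^\varepsilon\rfloor}$ only if the shifted phase $x + m\cdot\alpha$ falls in $X_{\lfloor N^\varepsilon\rfloor}$ for some $m$ close to $n$. Since the torus dimension is $b=1$ and $\alpha\in \mathrm{DC}(\kappa,\tau)$, the semi-algebraic counting estimate \cite[Corollary 9.7]{bbook} (equivalently \cite[Theorem 8.7]{liuapde}) applied to the $d$-frequency orbit $\{n\cdot\alpha\}_{n\in[-N,N]^d}$ gives $\#\mathcal{B}_{N,\lfloor N^\varepsilon\rfloor} \leq N^{1-\delta}$ for every $\delta<1$, uniformly in $x\in\T$. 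Invoking Corollary \ref{cor1} then closes the argument.

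The main obstacle is the first step: one must check that the LDT machinery of \cite{liuapde} is robust under complexification of the spectral parameter on the strip $0<\Im z \leq \epsilon_0$ uniformly in $E\in[-K,K]$, and that the threshold $\lambda_0$ genuinely depends only on $(\kappa,\tau,C_1,c_1,v)$ and not on $d$ in any essential way beyond what the scheme already controls. This is precisely what the quantitative non-self-adjoint version of the scheme in \cite{liuapde} is designed for, so the work reduces to assembling those ingredients; the remainder of the argument is the same template used for Theorems \ref{thm4} and \ref{thm6}, with the sharper exponent coming from the fact that $b=1$ allows $\delta$ to approach $1$.
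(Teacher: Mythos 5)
Your proposal follows essentially the same route as the paper: the paper simply cites \cite[Theorem 3.11]{liuapde} for the LDT of \eqref{opapp4} at large $\lambda$, takes $X_N$ semi-algebraic of degree at most $N^C$, deduces \eqref{gsublinear} for every $\delta<1$, and applies Corollary \ref{cor1}. The only caveat is a citation detail: for $b=1$ and $d$ arbitrary the sublinear count over the orbit $\{n\cdot\alpha\}_{n\in[-N,N]^d}$ rests on the semi-algebraic counting results for multi-dimensional lattices (as in \cite{jls20,liuapde}) rather than \cite[Corollary 9.7]{bbook}, which addresses the one-dimensional-orbit case used in Theorem \ref{thm4}.
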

\begin{proof}
By \cite[Theorem 3.11]{liuapde},  $H_x$ satisfies LDT. 
We can  assume that $ X_{N}$ (given by \eqref{gdec3}) is a semi-algebraic set with degree less than $N^C$. 
Then  \eqref{gsublinear} holds for any $\delta<1$. Now  Theorem \ref{thm5} follows from Corollary \ref{cor1}.
\end{proof}

	\subsection{Shifts: $d=b=2$}
	Assume $v$ is real analytic on $\T^2$.
	Let
	\begin{equation*}
	f^n(x)=(x_1+n_1\alpha_1,x_2+n_2\alpha_2)\mod\Z^2,
	\end{equation*}
	where $n=(n_1,n_2)\in \Z^2$, $\alpha=(\alpha_1,\alpha_2)\in \R^2$ and $x=(x_1,x_2)\in \T^2$.
	Let $H_x$ on $\ell^2(\Z^2)$ be given by
	\begin{equation}\label{opapp6}
	H_x=\lambda^{-1}S+v(f^n(x))\delta_{nn^\prime}=\lambda^{-1}S+
	v(x_1+n_1\alpha_1,x_2+n_2\alpha_2)\delta_{nn^\prime}.
	\end{equation}

	\begin{theorem}\label{thmapp3}
		Let $H_x$ be given by \eqref{opapp6}.
		Suppose $v$ is non-constant on any line segment contained in $[0,1)^2$, $\alpha_1\in {\rm DC}(\kappa,\tau)$ and $\alpha_2\in {\rm DC}(\kappa,\tau)$ with $1\leq  \kappa<\frac{13}{12}$.
		Then there exists $\lambda_0=\lambda_0(\kappa,\tau,c_1,C_1,v)$ such that
		for any $\lambda>\lambda_0$ the following holds. For  any $\phi$ with compact support and any $\varepsilon>0$,  there exists $T_0>0$ (depending on $p,\phi,S, v,\alpha$ and $\varepsilon$) such that for any $T\geq T_0$
		\begin{equation}  
		\langle|\tilde{X}_{H_x}|_{\phi}^p\rangle(T)	\leq  (\ln T)^{  \frac{4p}{13-12\kappa}+\varepsilon}
		\end{equation}
		and  for any $t\geq T_0$
		\begin{equation}  \label{g22}
		\langle|{X}_{H_x}|_{\phi}^p\rangle(t)	\leq  (\ln t)^{\frac{4p}{13-12\kappa}+\varepsilon}.
		\end{equation}
		 
	\end{theorem}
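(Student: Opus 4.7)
The plan is to reduce Theorem \ref{thmapp3} to Corollary \ref{cor1} by verifying the sublinear bound \eqref{gsublinear} with any $\delta<\tfrac{13-12\kappa}{4}$; the stated moment estimates then follow directly from \eqref{gdq11} and \eqref{gdq12}, with the loss $\delta\uparrow\tfrac{13-12\kappa}{4}$ absorbed into $\varepsilon$. The template is the same as for Theorems \ref{thm5} and \ref{thm6}: establish LDT for $H_x$ at complex energies, make the exceptional set semialgebraic, and convert this into a bound on $\#\mathcal{B}_{N,N_1}$ via semialgebraic lattice-point counting.

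First I would verify the LDT for $H_x$ at $z=E+i\epsilon$ with $|E|\leq K$ and $0<\epsilon\leq\epsilon_0$. Because $v$ is non-constant on every line segment in $[0,1)^2$, the frequencies $\alpha_1,\alpha_2$ satisfy ${\rm DC}(\kappa,\tau)$, and $\lambda$ is taken sufficiently large in terms of $\kappa,\tau,c_1,C_1,v$, the large-coupling LDT in \cite{liuapde} (the same tool used in the proof of Theorem \ref{thm5}, but applied with base dynamics on $\T^2$) supplies, for each $N\geq N_0$, an exceptional set $X_N\subset\T^2$ with $\mathrm{Leb}(X_N)\leq e^{-N^\zeta}$ off which both the operator-norm bound $\|G_{Q_N}(z)\|\leq e^{N^\sigma}$ and the off-diagonal exponential decay hold for every $Q_N\in\mathcal{E}_N^0$.

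Second, I would make $X_N$ semialgebraic and do the counting. Approximating $v$ by a trigonometric polynomial of degree $\sim\log N$ and controlling the tail by Taylor expansion plus the standard perturbation argument (as in the proofs of Theorems \ref{thm4}--\ref{thm6}), one arranges $X_N$ to be semialgebraic in $\T^2$ of degree $\leq N^C$. For $N_1=\lfloor N^\varepsilon\rfloor$, a site $n=(n_1,n_2)\in[-N,N]^2$ can belong to $\mathcal{B}_{N,N_1}$ only if $(x_1+n_1'\alpha_1,\,x_2+n_2'\alpha_2)\in X_{N_1}$ for some $n'$ within distance $O(N_1)$ of $n$. Feeding this into the Bourgain-type semialgebraic lattice-point counting theorem for the product $\Z^2$-shift on $\T^2$ with Diophantine data ${\rm DC}(\kappa,\tau)$ and $1\leq\kappa<\tfrac{13}{12}$ (catalogued among the $\delta$-estimates in \cite{liuapde} referenced in the opening of this section) yields $\#\mathcal{B}_{N,N_1}\leq N^{1-\delta}$ for any $\delta<\tfrac{13-12\kappa}{4}$, which is precisely \eqref{gsublinear}. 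Corollary \ref{cor1} then completes the proof.

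The main obstacle is the sharp counting in the second step. The product shift $n\mapsto(n_1\alpha_1,n_2\alpha_2)$ does not possess a full joint Diophantine condition on $\T^2$, so one cannot appeal to a single Weyl-type equidistribution statement in both variables; instead the semialgebraic set $X_{N_1}$ must be sliced coordinate-by-coordinate and the slices balanced against the one-dimensional discrepancy $N^{-1/\kappa}(\log N)^2$ provided by Lemma \ref{ledissh}. The threshold $\kappa<\tfrac{13}{12}$ is exactly the range in which this balancing extracts a nontrivial positive $\delta$; beyond it the counting becomes vacuous and a qualitatively different input would be required. The LDT and the semialgebraic approximation of $X_N$ are routine given \cite{liuapde}.
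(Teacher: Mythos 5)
Your proposal is correct and follows essentially the same route as the paper: the paper's proof simply cites \cite[Theorem 3.20]{liuapde} for the LDT of $H_x$ at complexified energies and \cite[Theorem 5.1]{bk19} (see also Remark 11 in \cite{liuapde}) for the semialgebraic counting that gives \eqref{gsublinear} with any $\delta<\frac{13}{4}-3\kappa=\frac{13-12\kappa}{4}$, and then invokes Corollary \ref{cor1}. Your more detailed gloss on where the threshold $\kappa<\frac{13}{12}$ comes from (positivity of $\frac{13}{4}-3\kappa$, with the product shift handled slice-by-slice in the Bourgain--Kachkovskiy counting) is consistent with the cited inputs.
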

 
 \begin{proof}
 Recall that under the assumption in Theorem \ref{thmapp3}, $H_x$ satisfies LDT (see \cite[Theorem 3.20]{liuapde}).
 	By  \cite[Theorem 5.1]{bk19} (also item 1 of Remark 11  in \cite{liuapde}) , \eqref{gsublinear} holds for any $ \delta $ with $0<\delta<\frac{13}{4}-3\kappa$.
 \end{proof}
 
 \begin{remark}\label{re5}
\begin{enumerate}
	\item The LDT of the operator \eqref{opapp6} was established by the author in \cite{liuapde}, which builds on an earlier work of Bourgain–Kachkovskiy \cite{bk19}.
	\item A larger  power law logarithmic   bound  $ (\frac{4}{13-12\kappa})^2p$ was established by Shamis-Sodin  \cite{SS21}.
\end{enumerate}

 \end{remark}

\section*{Acknowledgments}
 
This research was 
supported by   NSF   DMS-2000345  and DMS-2052572.
	\section*{Conflict of Interest Statement }
The author  declares no conflicts of interest.
\section*{Data Availability Statement}
Data sharing is not applicable to this article as no new data were created or analyzed in this study.

\end{document}